\documentclass[11pt]{article}

\usepackage[margin=1in]{geometry}
\usepackage{times}
\usepackage{graphicx}
\usepackage{amsmath, amssymb, amsthm}
\usepackage{geometry}
\usepackage{enumitem}

\usepackage{booktabs}
\usepackage{xurl}
\usepackage[hidelinks]{hyperref}
\usepackage{natbib}
\usepackage{xcolor}
\usepackage{subcaption}
\usepackage{multirow}
\usepackage{placeins}
\usepackage[utf8]{inputenc}

\newtheorem{definition}{Definition}
\newtheorem{theorem}{Theorem}
\newtheorem{lemma}{Lemma}
\newtheorem{remark}{Remark}

\title{Agent-based Modeling: Equilibrium, Echo Chambers, and Efficiency in Hybrid Coevolutionary Opinion Games}

\author{Ming-Zhi Jiang\thanks{Department of Information Management \& Finance,
National Yang Ming Chiao Tung University, Taiwan. s112701018.mg12@nycu.edu.tw} \and
An-Tzu Teng\thanks{Institute of Information Management, National Yang Ming Chiao Tung University, Taiwan. a0783403.c@nycu.edu.tw} \and
Jun-En Liu\thanks{Department of Information Management \& Finance,
National Yang Ming Chiao Tung University, Taiwan. neil.mg12@nycu.edu.tw} \and
Po-An Chen\thanks{Institute of Information Management, National Yang Ming Chiao Tung University, Taiwan. poanchen@nycu.edu.tw} \and
Yung-Ming Li\thanks{Institute of Information Management, National Yang Ming Chiao Tung University, Taiwan. yml@nycu.edu.tw}
}

\date{}

\begin{document}

\maketitle

\begin{abstract}
Online discussion of political and gender-related issues is often heated, and when opinions
in a network draw closer, the convergence is readily taken as genuine consensus. Whether it
carries a cost is a question existing methods cannot answer: coevolutionary opinion formation
games measure the Price of Anarchy (PoA) of agents that update by numerical rules, while
simulations with large language model (LLM) agents report only descriptive indices.

We introduce the Hybrid Coevolutionary Opinion Game (H-COG), in which analytical and
LLM-driven agents share one network, choose their neighbors by opinion similarity in every
round, and hold stances drawn from real Reddit comments on gun control and abortion. To our
knowledge, H-COG is the first framework to place Friedkin--Johnsen best-response agents and
LLM agents in one coevolutionary game and to measure the social cost and PoA of LLM-driven
populations. We prove that on any fixed network, given the LLM agents' opinions, the
analytical agents' opinion stage has a unique equilibrium and the social optimum has a
closed form, and that the convergence guarantee of Chen et al. for optimistic gradient
ascent carries over to H-COG.

All runs converge structurally. LLM-driven populations are less polarized yet have about
five times the PoA of analytical ones; half of the gap comes from agents being pulled away
from their own prior positions, a distance we prove must carry a cost whenever expressed
opinions are more concentrated than intrinsic ones. Echo chambers form under every composition and grow out of the rewiring rule
rather than the initial topology. Opinions in these populations draw closer largely because
agents give up their own positions.
\end{abstract}
\medskip\noindent
\textbf{Keywords:} Opinion dynamics, Echo chambers, Price of Anarchy, Language models, Agent-based modeling

\section{Introduction}
% ===== 1 Introduction =====
Online platforms have turned the public square into a coevolving system: what a user reads
shapes what they think, and what they think reshapes whom they follow. Echo chambers are the
visible result. A population that mostly meets what already agrees with it polarizes further,
which is why the phenomenon has become a societal concern.
Explaining it means tracking opinions and ties together, not either alone.

Two lines of work approach this, and each gives up something the other keeps. Analytical
opinion models, among them DeGroot, Friedkin--Johnsen (FJ), bounded confidence and the
coevolutionary opinion games~\citep{degroot1974,friedkin1990,fotakis2016opinion,bhawalkar2013},
supply explicit update rules and formal tools for equilibrium, social cost and the Price of
Anarchy (PoA). What they give up
is language: each exchange becomes a numerical update, and the semantic content through which
people actually persuade one another drops out. LLM-based agent simulations keep that content,
letting agents read, reason and remember, and they reproduce echo chambers and
polarization~\citep{wang2025,chuang2024simulating,gu2025,piao2025emergence}. What they give up
is the yardstick. These studies aim at behavioral realism, and the simulators that mix LLM agents
with rule-based ones do so without a game or a rewiring
network~\citep{mou2024hisim,yao2025fde,chen2026agorasim}. No framework yet measures social cost
and PoA where language-driven and cost-driven agents coevolve with one network.

H-COG closes that gap by putting both kinds of agent in one network that rewires as opinions
change. A Type-C agent minimizes a numerical disagreement cost, while a Type-L agent
reads what its neighbors wrote and answers in language. The parameter \(\alpha\) fixes how many of each the population holds, both are read on
the same continuous scale, and every agent re-chooses its neighbors by closeness of position.
Sweeping \(\alpha\) therefore compares purely numerical, purely LLM-driven and mixed populations
under one set of conditions. Efficiency is measured by social cost and a topology-conditioned
PoA, with the observed cost and the planner optimum computed on the same realized network after
each rewiring.

The simulations rest on 5,199 \texttt{r/politics} comments about gun control and abortion. A
fine-tuned RoBERTa regressor maps each comment to a continuous stance in \([-1,+1]\), and
stratified samples from the resulting distributions set the agents' intrinsic opinions. The two
topics differ in how their stances are spread, which lets us ask whether the effect of
composition survives a change in the starting landscape. We sweep nine mixing levels of
\(\alpha\) from 0 to 1 across three initial topologies.

Three questions organize the study: how large an efficiency gap composition creates and
which cost term produces it (Section~\ref{sec:res-rq1}), how composition shapes polarization
and echo chambers (Section~\ref{sec:res-rq2}), and what the initial topology and the rewiring
rule each contribute to clustering (Section~\ref{sec:res-topology}).

\subsection{Contributions}
This study makes five contributions:
\begin{enumerate}
    \item \textbf{Hybrid coevolutionary framework.} We introduce H-COG, which integrates
    cost-minimizing Type-C agents and LLM-driven Type-L agents in a common network whose opinions
    and connections coevolve.
    \item \textbf{Equilibrium analysis.} We formalize H-COG as a multi-player general-sum Markov
    game, prove that its opinion stage has a unique best response, an invariant opinion
    interval, a unique Nash equilibrium on every fixed graph given the Type-L opinions, and a closed-form planner optimum,
    show that $K$-NN rewiring is the high-intensity limit of a stochastic kernel, bound the
    conformity cost below by the gap between the spreads of intrinsic and expressed
    opinions, and combine these
    results with a convergence theorem for Markov coevolutionary opinion games.
    \item \textbf{Empirically grounded initialization.} We construct a two-stage pipeline that
    extracts real-world stance distributions from Reddit using semantic clustering and a
    fine-tuned RoBERTa regressor, then initializes agents through stratified sampling.
    \item \textbf{Common efficiency evaluation.} We formulate social cost and
    topology-conditioned PoA on a shared opinion scale, and decompose social cost into disagreement
    and conformity components to identify the source of efficiency differences.
    \item \textbf{Systematic population comparison.} We compare nine Type-C/Type-L mixing levels
    across two topics and three initial topologies, evaluating efficiency, polarization,
    echo-chamber structure, and convergence behavior.
\end{enumerate}

\section{Related Work}
\label{sec:related}

\subsection{Opinion Dynamics and Coevolutionary Games}

Opinion dynamics begins with agents that hold a number and update it by averaging their
neighbors. DeGroot formalized that process and showed that under mild conditions repeated
averaging drives the whole network to consensus~\citep{degroot1974}. Friedkin and Johnsen
refined it by separating the belief an agent holds privately from the one it expresses, which
lets the system settle where disagreement survives instead of vanishing~\citep{friedkin1990}.

Bhawalkar et al.\ took the further step of letting the network move as well~\citep{bhawalkar2013}.
In their $K$-NN game each agent links to the $K$ agents whose expressed opinions lie closest to
its intrinsic opinion and picks an expressed opinion that trades social pressure against private
belief. That setup admits a Price of Anarchy (PoA) analysis. The PoA of the $K$-NN game is bounded
by a constant when agents weight their own opinion above their neighbors' ($\rho>1$) and is at
least $1/\rho^2$ below it, and a pure equilibrium need not exist; H-COG with $\alpha=1$ is this
game with heterogeneous stubbornness. On fixed undirected graphs the Friedkin--Johnsen game loses
at most a factor of $9/8$, and on directed graphs the loss is unbounded~\citep{bindel2015bad}.
Chen et al.\ bound the PoA for directed graphs in which each node influences the others no more
than it is influenced by them~\citep{chen2016}. \citet{chen:lu:lin:shi:hung} make $K$-NN
rewiring stochastic, cast the game as a general-sum Markov game, and show that optimistic
gradient ascent reaches an approximate Nash equilibrium or, otherwise, a bounded price of
anarchy;
Section~\ref{sec:equilibrium} builds on this result. Related work extends the analysis to
competitive opinion optimization~\citep{chen:lu:lin}, designs networks that minimize
polarization and disagreement at the Friedkin--Johnsen equilibrium~\citep{musco2018minimizing},
and proves convergence of Hegselmann--Krause
dynamics with restricted neighborhoods~\citep{fotakis2016opinion}. All of these models use
numerical update rules and leave out the reasoning, memory and meaning through which people
form and state an opinion.

\subsection{LLM-based Agent Simulation for Echo Chambers}

Large language models let an agent read text, reason about it and answer in kind. Park et al.\
showed that such agents produce believable social behavior in interactive
settings~\citep{park2023}.

Wang et al.\ placed LLM agents on fixed network
structures, with a recommendation algorithm selecting whom each agent hears, and found that the
agents track observed polarization more closely than Friedkin--Johnsen or bounded-confidence
models do~\citep{wang2025}. Chuang et al.\ showed that networks of LLM agents drift toward
consensus unless confirmation bias is induced~\citep{chuang2024simulating}. Gu et al.\ let the
language model decide rewiring as well, first on real Twitter data~\citep{gu2025} and then with
paired rewiring validated on three real datasets~\citep{gu2026language}, and Piao et al.\ observed
polarization emerging in networks that thousands of LLM agents form themselves
\citep{piao2025emergence}.

A second line mixes LLM agents with rule-based ones. HiSim pairs LLM core users with ordinary
users governed by bounded-confidence and related models, initialized from real tweets
\citep{mou2024hisim}; FDE-LLM couples LLM opinion leaders with dynamics equations on real Weibo
data~\citep{yao2025fde}; and AgoraSim runs populations that mix LLM and classical agents at
controlled ratios~\citep{chen2026agorasim}. None of these simulators rewires the network or
places the two types in a game. Qasmi et al.\ place LLM agents in a non-cooperative game of
opinion polarization without an equilibrium or efficiency analysis~\citep{qasmi2025competing},
and welfare measures for LLM agent societies exist outside opinion formation
\citep{piatti2024cooperate}.

These studies report descriptive indices such as polarization and modularity. None measures
what the resulting equilibria cost, and none places LLM agents and Friedkin--Johnsen
best-response agents in one coevolving game.

\section[Preliminaries: Formalization of the Hybrid Markov Coevolutionary Opinion Formation Game]{Preliminaries: Formalization of the Hybrid Markov\texorpdfstring{\\}{ } Coevolutionary Opinion Formation Game}
\label{sec:prelim}
\subsection{Agent Population and Opinion Variables}
\label{sec:framework-agents}

Let $\mathcal{N}=\{1,\ldots,N\}$ be a set of $N$ agents. Each agent $i$ has an
\textbf{intrinsic opinion} $s_i\in[-1,+1]$, representing a fixed prior stance,
and an \textbf{expressed opinion} $z_i^{(t)}\in[-1,+1]$, which evolves as the
agent responds to its neighbors.

The mixing parameter $\alpha\in[0,1]$ partitions the population into two types.
\textbf{Type-C} agents (fraction $\alpha$) update their expressed opinions by
minimizing a local disagreement cost derived from the Friedkin--Johnsen
model~\citep{friedkin1990}. \textbf{Type-L} agents (fraction $1-\alpha$)
generate a natural-language opinion $o_i^{(t)}\in\mathcal{V}^*$, where
$\mathcal{V}$ is a finite token vocabulary. To align text-based interaction with
numerical game-theoretic optimization, we define a continuous semantic mapping
\[
e:\mathcal{V}^*\rightarrow[-1,+1]
\]
that places this opinion on the common numerical scale,
$z_i^{(t)}=e(o_i^{(t)})$. The cases $\alpha=1$ and $\alpha=0$ give
the purely numerical and purely LLM-driven baselines, respectively.

We write $\mathcal{N}_C$ and $\mathcal{N}_L$ for the sets of Type-C and Type-L
agents. Every agent carries a stubbornness coefficient $\rho_i>0$. For a Type-C
agent it enters the update rule, and for a Type-L agent it enters only the
social cost. Opinions are represented at finite precision: we write
$\mathcal{Z}\subset[-1,+1]$ for the finite set of representable opinions and
$\mathcal{V}^{\le M}$ for texts of at most $M$ tokens.

\subsection{State Space and Opinion Updates}
\label{sec:framework-updates}

The global state at step $t$ is
\[
S^{(t)}=\left(G^{(t)},\mathbf{z}^{(t)},\mathbf{X}^{(t)}\right),
\]
where $G^{(t)}=(S_1^{(t)},\ldots,S_N^{(t)})$ is the directed network,
$S_i^{(t)}\subset\mathcal{N}\setminus\{i\}$ is agent $i$'s out-neighborhood
with $|S_i^{(t)}|=K$ for $t\ge1$ ($S_i^{(0)}$ is $i$'s neighbor set in the initial graph, on
which Eq.~\eqref{eq:type-c-update} is applied with $K$ replaced by $|S_i^{(0)}|$, and with $z_i^{(1)}=s_i$ if $S_i^{(0)}=\emptyset$), $\mathbf{z}^{(t)}$ is the opinion profile from which
$G^{(t)}$ was built, and $\mathbf{X}^{(t)}=(\mathbf{x}_i^{(t)})_{i\in\mathcal{N}_L}$
collects the structured text records of the Type-L agents that produced
$\mathbf{z}^{(t)}$.

\subsubsection*{Type-C analytical agents}
A Type-C agent chooses an expressed opinion from
$\mathcal{A}_i=\mathcal{Z}$. Holding its neighbors' latest opinions fixed, its
myopic stage-cost minimizer is
\begin{equation}
\label{eq:type-c-update}
z_i^{(t+1)}
=\arg\min_{z\in[-1,+1]}C_i\!\left(G^{(t)},(z,\mathbf{z}_{-i}^{(t)})\right)
=\frac{\sum_{j\in S_i^{(t)}}z_j^{(t)}+\rho_iKs_i}
       {K(1+\rho_i)}.
\end{equation}
This is the unique minimizer of the stage cost, and it stays in $[-1,+1]$
(Lemmas~\ref{lem:br} and~\ref{lem:inv}). In the Markov-game representation, a
more general Type-C policy $\pi_i^C(\,\cdot\mid S^{(t)})$ may instead optimize
expected discounted costs; the simulations use the myopic update in
Eq.~\eqref{eq:type-c-update}.

\subsubsection*{Type-L generative agents}
A Type-L action is a structured sequence
$\mathbf{x}_i^{(t+1)}\in\mathcal{V}^{\le M}$ containing
\texttt{reasoning}, \texttt{opinion}, and \texttt{memory} in that order. Its
prompt combines the previous record, current neighbor messages, and the agent's
opening text $o_i^{(0)}$, which carries its intrinsic stance in words (the number $s_i$
is never shown to the model):
\[
P_i^{(t)}
=\operatorname{Prompt}\!\left(
\mathbf{x}_i^{(t)},\{\mathbf{x}_j^{(t)}\}_{j\in S_i^{(t)}},o_i^{(0)}
\right).
\]
An autoregressive LLM with parameters $\theta$ generates
$\mathbf{x}_i^{(t+1)}=(y_1,\ldots,y_m)$ according to
\[
\mathbb{P}_{\theta}\!\left(\mathbf{x}_i^{(t+1)}\mid P_i^{(t)}\right)
=\prod_{k=1}^{m}
\mathbb{P}_{\theta}\!\left(y_k\mid y_{<k},P_i^{(t)}\right).
\]
Given a temperature $\tau>0$, each next-token probability is the softmax
\[
\mathbb{P}_{\theta}\!\left(y_k\mid y_{<k},P_i^{(t)}\right)
=\frac{\exp\!\left(g_{\theta}\!\left(y_k\mid y_{<k},P_i^{(t)}\right)/\tau\right)}
{\sum_{y\in\mathcal{V}}\exp\!\left(g_{\theta}\!\left(y\mid y_{<k},P_i^{(t)}\right)/\tau\right)},
\]
where $g_{\theta}(\cdot)$ denotes the unnormalized logit output by the
LLM. This temperature-$\tau$ softmax sampling induces a stochastic behavior
policy over $\mathcal{Z}$,
\[
\pi_i^L\!\left(z\mid S^{(t)}\right)
=\sum_{\mathbf{x}\in\mathcal{V}^{\le M}:\;e(o(\mathbf{x}))=z}
\mathbb{P}_{\theta}\!\left(\mathbf{x}\mid P_i^{(t)}\right),
\]
where $o(\mathbf{x})$ is the \texttt{opinion} field of $\mathbf{x}$. The
simulations use deterministic decoding (the limit $\tau\to0$). Each Type-L agent makes one LLM call
per step, and each call rewrites the
\texttt{memory} field. Only the free-form
\texttt{opinion} field $o_i^{(t+1)}$ is passed to the semantic mapping, giving
$z_i^{(t+1)}=e(o_i^{(t+1)})\in[-1,+1]$.

\subsection{Coevolutionary Rewiring and Transition Kernel}
\label{sec:KNN}

After all opinions are updated, the network changes from $G^{(t)}$ to
$G^{(t+1)}$. Let
$d_{ij}^{(t+1)}=|z_j^{(t+1)}-s_i|$ be the distance between peer $j$'s
expressed opinion and agent $i$'s intrinsic opinion. A general stochastic
rewiring kernel assigns attachment weight
\[
w_{ij}^{(t+1)}=\exp\!\left(-\beta d_{ij}^{(t+1)}\right),
\qquad \beta>0,
\]
and draws each $K$-subset $S$ with probability
\[
\mathbb{P}\!\left(S_i^{(t+1)}=S\mid\mathbf{z}^{(t+1)},s_i\right)
=
\frac{\prod_{j\in S}w_{ij}^{(t+1)}}
{\displaystyle\sum_{\substack{S'\subset\mathcal{N}\setminus\{i\}\\|S'|=K}}
 \prod_{k\in S'}w_{ik}^{(t+1)}}.
\]
The experiments use the $\beta\rightarrow\infty$ limit:
\begin{equation}
\label{eq:knn}
S_i^{(t+1)}
=\operatorname*{arg\,top\text{-}K}_{j\ne i}
\left(-|s_i-z_j^{(t+1)}|\right),
\qquad K=5.
\end{equation}
Each agent therefore connects to the five peers whose expressed opinions lie nearest its own
intrinsic stance, with ties broken uniformly at random (seeded). Lemma~\ref{lem:knn}
characterizes this limit. Rewiring is each agent's own choice of neighbors, made by this fixed rule; in the Markov-game formulation of Section~\ref{sec:markov-mapping} it enters as the state transition.

Under conditional independence across agents, the transition kernel of the full state
combines the rewiring draws with the Type-L texts:
\[
\mathcal{P}\!\left(S^{(t+1)}\mid S^{(t)},\mathbf{a}^{(t)}\right)
=\left(\prod_{i=1}^{N}
\mathbb{P}\!\left(S_i^{(t+1)}\mid\mathbf{z}^{(t+1)},s_i\right)\right)
\times\mathbf{1}\!\left[\mathbf{X}^{(t+1)}
=\left(\mathbf{a}_i^{(t)}\right)_{i\in\mathcal{N}_L}\right]
\times\mathbf{1}\!\left[\mathbf{z}^{(t+1)}=\mathbf{z}(\mathbf{a}^{(t)})\right],
\]
where $\mathbf{z}(\mathbf{a}^{(t)})$ is the opinion profile the joint action is scored to.
Section~\ref{sec:markov-mapping} shows that this state space is finite.

\subsection{Social Cost and Price of Anarchy}
\label{sec:framework-cost}

Regardless of agent type, agent $i$ incurs the stage cost
\[
C_i\!\left(G^{(t)},\mathbf{z}^{(t)}\right)
=
\sum_{j\in S_i^{(t)}}(z_i^{(t)}-z_j^{(t)})^2
+\rho_iK(z_i^{(t)}-s_i)^2.
\]
The global social cost is the sum of individual costs:
\begin{equation}
\label{eq:sc}
SC(G,\mathbf{z})
=
\underbrace{\sum_i\sum_{j\in S_i}(z_i-z_j)^2}_{\text{disagreement}}
+
\underbrace{\sum_i\rho_iK(z_i-s_i)^2}_{\text{conformity}}.
\end{equation}
The first component measures disagreement with current neighbors, while the
second measures displacement from intrinsic opinions.

\begin{definition}[Nash equilibrium]
A joint strategy profile
$\boldsymbol{\pi}^*=(\pi_1^*,\ldots,\pi_N^*)$ is a Nash equilibrium if:
\begin{enumerate}
    \item for every Type-C agent $i\in\mathcal{N}_C$, $\pi_i^*$ minimizes
    its expected discounted cumulative stage cost given
    $\boldsymbol{\pi}_{-i}^*$ from every initial state $S^{(0)}$,
    \[
    \pi_i^*\in\arg\min_{\pi_i}
    \mathbb{E}_{(\pi_i,\boldsymbol{\pi}_{-i}^*)}\!\left[
    \sum_{t=0}^{\infty}\gamma^t
    C_i(G^{(t)},\mathbf{z}^{(t+1)})\mid S^{(0)}
    \right];
    \]
    \item for every Type-L agent $j\in\mathcal{N}_L$, $\pi_j^*$ is induced
    by the autoregressive LLM kernel
    $\mathbb{P}_{\theta}(\mathbf{x}\mid P_j)$.
\end{enumerate}
\end{definition}

The stage cost is charged on the graph the action faces. For $\gamma=0$ the Type-C condition
is the stage best response of Lemma~\ref{lem:br} on $G^{(t)}$; Eq.~\eqref{eq:type-c-update}
iterates this best response against the previous opinions, and the two coincide on a
stationary graph (Lemma~\ref{lem:fixed}). Under assumption~(i) of
Section~\ref{sec:assumptions} the LLM kernel is a point of the Type-L policy simplex, so the
Nash gap of Theorem~\ref{thm:equilibrium} is taken over all players.

\begin{definition}[Generalized Price of Anarchy]
\label{sec:poa}
Let $\mathcal{E}_{\mathrm{NE}}(\alpha)$ denote the stationary state-action
distributions resulting from an equilibrium at mixture level $\alpha$. The
generalized hybrid PoA is
\begin{equation}
\label{eq:PoA}
\mathrm{PoA}(\alpha)
=
\frac{\displaystyle
\max_{\nu\in\mathcal{E}_{\mathrm{NE}}(\alpha)}
\mathbb{E}_{(G,\mathbf{z})\sim\nu}[SC(G,\mathbf{z})]}
{\displaystyle\min_{G^*,\mathbf{z}^*}SC(G^*,\mathbf{z}^*)}.
\end{equation}
The denominator ranges over all $K$-out graphs, a relaxation of the $K$-NN optimum of
\citet{bhawalkar2013}.
\end{definition}

This generalized definition allows $\mathrm{PoA}(\alpha)$ to be analyzed and computed over
the whole range $\alpha\in[0,1]$, and it separates the effect of generative LLM reasoning
on social welfare and network stability.

The experiments report a topology-conditioned operational measure. At step
$t$, the planner optimizes opinions on the same realized graph:
\[
\mathbf{z}_t^{\mathrm{opt}}
=\arg\min_{\mathbf{z}\in[-1,+1]^N}SC(G^{(t)},\mathbf{z}),
\]
and
\begin{equation}
\label{eq:poa-operational}
\mathrm{PoA}^{(t)}
=
\frac{SC(G^{(t)},\mathbf{z}^{(t)})}
     {SC(G^{(t)},\mathbf{z}_t^{\mathrm{opt}})}
\geq1.
\end{equation}
Here $G^{(t)}$ is the graph rebuilt from $\mathbf{z}^{(t)}$, so $\mathrm{PoA}^{(t)}$ is
evaluated after rewiring. Lemma~\ref{lem:opt} gives the minimizer in closed form and shows
that this ratio is well defined. Since $SC(G,\mathbf{z}^{\mathrm{opt}}(G))\ge\min_{G^*,\mathbf{z}^*}SC(G^*,\mathbf{z}^*)$ for
every $G$, every $\nu\in\mathcal{E}_{\mathrm{NE}}(\alpha)$ satisfies
$\mathbb{E}_{\nu}\big[SC(G,\mathbf{z})/SC(G,\mathbf{z}^{\mathrm{opt}}(G))\big]\le\mathrm{PoA}(\alpha)$,
so the expected topology-conditioned ratio is a lower bound on Eq.~\eqref{eq:PoA}. Fixing the graph isolates the welfare loss associated with decentralized
opinion updating from that associated with the topology produced by rewiring.

\section{Equilibrium Analysis}
\label{sec:equilibrium}
\begingroup
\setlength{\emergencystretch}{1.5em}
In stochastic coevolutionary opinion formation games, network topologies evolve through
randomized rewiring, so small shifts in expressed opinions may drastically alter global
transition probabilities~\citep{chen:lu:lin:shi,chen:lu:lin:shi:hung}. By framing the
hybrid population of analytical cost-minimizing agents ($\mathcal{N}_C$) and LLM generative
agents ($\mathcal{N}_L$) as a multi-player general-sum Markov game and combining this
formulation with Theorem~2 of \citet{chen:lu:lin:shi:hung}, we show that Optimistic Gradient
Ascent (OGA) learning dynamics reach an approximate Nash equilibrium or, otherwise, a bounded
price of anarchy.

\subsection{Properties of the Opinion Stage}
\label{sec:stage}

Throughout this subsection we fix a step and suppress the superscript $(t)$ where no
confusion arises. We write $z^{\mathrm{BR}}$ for a best response, $\mathbf{z}^{\mathrm{NE}}$
for a Nash fixed point and $\mathbf{z}^{\mathrm{opt}}$ for the planner's optimum.

\begin{lemma}[Type-C best response]
\label{lem:br}
Fix a Type-C agent $i$ with out-neighborhood $S_i$, $|S_i|=K$, and the opinions
$\mathbf{z}_{-i}$ of the other agents. The function
\[
c_i(z)=\sum_{j\in S_i}(z-z_j)^2+\rho_iK(z-s_i)^2,
\]
which is the stage cost of Section~\ref{sec:framework-cost} with $z_i$ replaced by the free
variable $z$, is strictly convex and has the unique minimizer
\[
z_i^{\mathrm{BR}}=\frac{\sum_{j\in S_i}z_j+\rho_iKs_i}{K(1+\rho_i)} .
\]
\end{lemma}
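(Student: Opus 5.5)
The plan is to treat $c_i$ as a one-variable quadratic and read off both claims directly. Expanding each square gives
\[
c_i(z)=\bigl(|S_i|+\rho_iK\bigr)z^2-2\Bigl(\sum_{j\in S_i}z_j+\rho_iKs_i\Bigr)z+\text{const},
\]
where the constant does not depend on $z$. Since $|S_i|=K$, the leading coefficient is $K(1+\rho_i)$.

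For strict convexity, I would compute $c_i''(z)=2K(1+\rho_i)$. This is strictly positive because $K\ge1$ and $\rho_i>0$. Equivalently, $c_i$ is a sum of convex squares, and the term $\rho_iK(z-s_i)^2$ is strictly convex.

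For the minimizer, a strictly convex differentiable function on $\mathbb{R}$ has at most one stationary point, and that point is its global minimizer. Setting
\[
c_i'(z)=2\sum_{j\in S_i}(z-z_j)+2\rho_iK(z-s_i)=0
\]
and solving the linear equation gives
\[
z=\frac{\sum_{j\in S_i}z_j+\rho_iKs_i}{K(1+\rho_i)}.
\]
Uniqueness follows from strict convexity, since two distinct minimizers would make the midpoint strictly better.

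The only point needing care is the domain. Eq.~\eqref{eq:type-c-update} takes the argmin over $[-1,+1]$ (and the action set is $\mathcal{Z}$), whereas the lemma is stated for the unconstrained minimizer. I would show that the stationary point already lies in $[-1,+1]$, so it is also the unique constrained minimizer. This holds because the point is a convex combination of the values $z_j\in[-1,+1]$ for $j\in S_i$ and of $s_i\in[-1,+1]$: the weights are $1/(K(1+\rho_i))$ on each $z_j$ and $\rho_i/(1+\rho_i)$ on $s_i$, and they sum to $1$. This is essentially the content of Lemma~\ref{lem:inv}, so I can either cite it or include this one-line remark.

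I expect no real obstacle. The only subtlety is making sure that $|S_i|=K$ is used, so that the denominator is exactly $K(1+\rho_i)$. If the set $S_i^{(0)}$ has a different size, $K$ is replaced by $|S_i^{(0)}|$ in the first sum only, as noted in Section~\ref{sec:framework-updates}.
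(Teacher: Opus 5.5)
Your proof is correct and matches the paper's argument: $c_i''(z)=2K(1+\rho_i)>0$ gives strict convexity, and solving $c_i'(z)=0$ gives the formula; the $[-1,1]$ containment you add is proved separately in the paper as Lemma~\ref{lem:inv}. One small aside, outside the lemma's scope: at $t=0$ the paper replaces $K$ by $|S_i^{(0)}|$ throughout Eq.~\eqref{eq:type-c-update}, including the conformity weight $\rho_iK$, not only in the neighbor count, so the weight on $s_i$ stays $\rho_i/(1+\rho_i)$.
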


\begin{proof}
The function $c_i$ is a sum of $K+1$ squares, so $c_i''(z)=2K+2\rho_iK=2K(1+\rho_i)$, which is
positive because $\rho_i>0$. Hence $c_i$ is strictly convex. Since $c_i(z)\to\infty$ as
$|z|\to\infty$, a minimizer exists, and strict convexity makes it unique. Setting
$c_i'(z)=2\sum_{j\in S_i}(z-z_j)+2\rho_iK(z-s_i)$ to zero gives
$K(1+\rho_i)\,z=\sum_{j\in S_i}z_j+\rho_iKs_i$, which is the stated expression.
\end{proof}

\begin{lemma}[Invariance of the opinion interval]
\label{lem:inv}
If $s_i\in[-1,1]$ for all $i$, then $z_i^{(t)}\in[-1,1]$ for all $i\in\mathcal{N}$ and all
$t\ge0$.
\end{lemma}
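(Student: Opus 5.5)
The proof goes by induction on $t$, treating Type-C and Type-L agents separately. The induction hypothesis is that $z_j^{(t)}\in[-1,1]$ for every $j\in\mathcal{N}$.

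For the base case we need an explicit convention for the initial expressed opinions $z_i^{(0)}$. I will take $z_i^{(0)}=s_i\in[-1,1]$ for Type-C agents, or any initialization in $[-1,1]$. For Type-L agents, $z_i^{(0)}=e(o_i^{(0)})$, which lies in $[-1,1]$ because the semantic mapping $e$ has codomain $[-1,+1]$ by definition.

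In the inductive step, the Type-L case is immediate. For $i\in\mathcal{N}_L$ we have $z_i^{(t+1)}=e(o_i^{(t+1)})\in[-1,+1]$ by the codomain of $e$, with no use of the hypothesis. The same holds for any sampled value under $\pi_i^L$, since it is supported on $\mathcal{Z}\subset[-1,1]$.

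For $i\in\mathcal{N}_C$, Eq.~\eqref{eq:type-c-update} and Lemma~\ref{lem:br} give
\[
z_i^{(t+1)}=\sum_{j\in S_i^{(t)}}\frac{1}{K(1+\rho_i)}\,z_j^{(t)}+\frac{\rho_i}{1+\rho_i}\,s_i .
\]
The weights are nonnegative and sum to $K\cdot\frac{1}{K(1+\rho_i)}+\frac{\rho_i}{1+\rho_i}=1$. So $z_i^{(t+1)}$ is a convex combination of points of $[-1,1]$: the neighbours' opinions by the induction hypothesis and $s_i$ by assumption. Since $[-1,1]$ is convex, $z_i^{(t+1)}\in[-1,1]$; equivalently, the triangle inequality gives $|z_i^{(t+1)}|\le 1$.

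Two boundary cases need a remark. At $t=0$, $K$ is replaced by $|S_i^{(0)}|$; the same computation applies, since the weights still sum to one when $|S_i^{(0)}|\ge1$. If $S_i^{(0)}=\emptyset$, then $z_i^{(1)}=s_i\in[-1,1]$ by convention.

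Rewiring via Eq.~\eqref{eq:knn}, or via the stochastic kernel, changes only $G$ and not $\mathbf{z}$, so it does not affect the invariant.

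There is no real obstacle. The only point needing care is that the base case and the Type-L step rely on the stated codomain of $e$ and on the initialization conventions, not on any dynamics. It is also worth noting that for Type-C agents the unconstrained minimizer of Lemma~\ref{lem:br} coincides with the constrained $\arg\min_{z\in[-1,1]}$ in Eq.~\eqref{eq:type-c-update}, precisely because it already lies in the interval. I will also note that finite-precision rounding onto $\mathcal{Z}$ preserves the bound, provided rounding maps into $\mathcal{Z}\subset[-1,1]$.
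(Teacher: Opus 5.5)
Your proof is correct and follows the paper's argument: Type-L opinions are bounded directly by the range of $e$ (the paper points to the $\tanh$ regression head, giving $(-1,1)$), and Type-C opinions are bounded by induction as convex combinations with weights $\frac{1}{|S_i^{(t)}|(1+\rho_i)}$ and $\frac{\rho_i}{1+\rho_i}$, with the same empty-neighborhood convention. The only difference is a convention: in the paper a Type-L agent starts at $z_i^{(0)}=s_i$ rather than at $e(o_i^{(0)})$, and both values lie in $[-1,1]$.
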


\begin{proof}
For a Type-L agent, $z_i^{(0)}=s_i\in[-1,1]$, and for every opinion text $o$ the score
$e(o)=\tanh\!\big(\mathbf{W}\,\mathbf{h}_{[\mathrm{CLS}]}(o)\big)$ of Section~\ref{sec:stance}
lies in $(-1,1)$, independently of $t$. For a Type-C agent we argue by induction on $t$. At
$t=0$, $z_i^{(0)}=s_i\in[-1,1]$. Suppose $z_j^{(t)}\in[-1,1]$ for all $j$. By
Lemma~\ref{lem:br}, $z_i^{(t+1)}$ is a weighted combination of
$\{z_j^{(t)}\}_{j\in S_i^{(t)}}$ and $s_i$ with weight $\tfrac{1}{|S_i^{(t)}|(1+\rho_i)}$ on each
of the $|S_i^{(t)}|$ neighbors and $\tfrac{\rho_i}{1+\rho_i}$ on the intrinsic opinion (and
$z_i^{(t+1)}=s_i$ if $S_i^{(t)}=\emptyset$). These weights are nonnegative and sum to
$\tfrac{1}{1+\rho_i}+\tfrac{\rho_i}{1+\rho_i}=1$, so
$z_i^{(t+1)}$ is a convex combination of points of $[-1,1]$ and lies in $[-1,1]$.
\end{proof}

Consequently the action space of a Type-C agent may be taken as
$\mathcal{Z}\subset[-1,1]$ without loss of generality.

\begin{lemma}[Equilibrium of the opinion stage on a fixed graph]
\label{lem:fixed}
Fix $G$ and the Type-L opinions $\mathbf{z}_L\in[-1,1]^{|\mathcal{N}_L|}$. Define
$\Phi:[-1,1]^{|\mathcal{N}_C|}\to[-1,1]^{|\mathcal{N}_C|}$ by
\[
\Phi_i(\mathbf{z}_C)=\frac{\sum_{j\in S_i}z_j+\rho_iKs_i}{K(1+\rho_i)},
\qquad i\in\mathcal{N}_C,
\]
where $z_j$ is taken from $\mathbf{z}_L$ for $j\in\mathcal{N}_L$ (we write
$\Phi(\mathbf{z}_C;\mathbf{z}_L)$ when the dependence on $\mathbf{z}_L$ matters). Then
\begin{enumerate}
\item $\Phi$ is a contraction in $\|\cdot\|_\infty$ with modulus
      $q=\max_{i\in\mathcal{N}_C}\frac{1}{1+\rho_i}<1$;
\item $\Phi$ has a unique fixed point $\mathbf{z}_C^{\mathrm{NE}}$, which is the unique Nash
      equilibrium of the opinion stage among Type-C agents given $(G,\mathbf{z}_L)$;
\item the synchronous iterates $\mathbf{z}_C^{(t+1)}=\Phi(\mathbf{z}_C^{(t)})$ satisfy
      $\|\mathbf{z}_C^{(t)}-\mathbf{z}_C^{\mathrm{NE}}\|_\infty\le 2q^{t}$, with $t$ counted
      from the first step on $G$;
\item if the Type-L opinions vary, $\mathbf{z}_C^{(t+1)}=\Phi(\mathbf{z}_C^{(t)};\mathbf{z}_L^{(t)})$,
      and $\mathbf{z}_C^{\mathrm{NE}}$ is taken at $\mathbf{z}_L$, then
      $\|\mathbf{z}_C^{(t)}-\mathbf{z}_C^{\mathrm{NE}}\|_\infty\le
      \max\{2q^{t},\,q\max_{u<t}\|\mathbf{z}_L^{(u)}-\mathbf{z}_L\|_\infty\}$.
\end{enumerate}
For $\alpha=1$ ($\mathcal{N}_L=\emptyset$), $\mathbf{z}^{\mathrm{NE}}$ is the
Friedkin--Johnsen equilibrium on $G$.
\end{lemma}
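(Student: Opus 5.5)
The plan is to treat $\Phi$ as an affine map whose linear part is a substochastic matrix, and to read all four items off the standard Banach fixed-point estimates. Well-definedness comes first. By Lemma~\ref{lem:inv}, each $\Phi_i(\mathbf{z}_C)$ is a convex combination of points of $[-1,1]$: the neighbor opinions (drawn from $\mathbf{z}_C$ or $\mathbf{z}_L$) and $s_i$. So $\Phi$ maps $[-1,1]^{|\mathcal{N}_C|}$ into itself.

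For item~1, take $\mathbf{z}_C,\mathbf{z}'_C$ with the same $\mathbf{z}_L$. Then
\[
\Phi_i(\mathbf{z}_C)-\Phi_i(\mathbf{z}'_C)=\frac{1}{K(1+\rho_i)}\sum_{j\in S_i\cap\mathcal{N}_C}(z_j-z'_j).
\]
The Type-L terms and the $s_i$ term cancel. The absolute value is therefore at most $\frac{|S_i\cap\mathcal{N}_C|}{K(1+\rho_i)}\|\mathbf{z}_C-\mathbf{z}'_C\|_\infty\le\frac{1}{1+\rho_i}\|\mathbf{z}_C-\mathbf{z}'_C\|_\infty$. Taking the maximum over $i$ gives modulus $q<1$, since finitely many $\rho_i>0$.

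For item~2, Banach's theorem on the complete set $[-1,1]^{|\mathcal{N}_C|}$ gives a unique fixed point. To identify it with the Nash equilibrium, I use Lemma~\ref{lem:br}: with $G$ and $\mathbf{z}_L$ fixed, agent $i$'s cost depends on $\mathbf{z}_{-i}$ only through its neighbors, and its unique best response is $\Phi_i(\mathbf{z}_C)$. A profile is a Nash equilibrium among Type-C agents exactly when every agent plays its best response, that is, when $\mathbf{z}_C=\Phi(\mathbf{z}_C)$. Uniqueness of the equilibrium then follows from uniqueness of the fixed point. The $\alpha=1$ remark is just the observation that the fixed-point equations are the FJ equilibrium equations $K(1+\rho_i)z_i=\sum_{j\in S_i}z_j+\rho_iKs_i$.

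For item~3, I iterate the contraction: $\|\mathbf{z}_C^{(t)}-\mathbf{z}_C^{\mathrm{NE}}\|_\infty\le q^t\|\mathbf{z}_C^{(0)}-\mathbf{z}_C^{\mathrm{NE}}\|_\infty\le 2q^t$. The last step holds because both points lie in the box, whose $\infty$-diameter is $2$. Here $\mathbf{z}_C^{(0)}$ is the state at the first step on $G$, and it lies in the box by Lemma~\ref{lem:inv}.

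Item~4 is the main obstacle, because the stated bound is a $\max$ rather than the usual sum. Write $e_t=\|\mathbf{z}_C^{(t)}-\mathbf{z}_C^{\mathrm{NE}}\|_\infty$ and $\delta_u=\|\mathbf{z}_L^{(u)}-\mathbf{z}_L\|_\infty$. Comparing $\Phi(\mathbf{z}_C^{(t)};\mathbf{z}_L^{(t)})$ with $\Phi(\mathbf{z}_C^{\mathrm{NE}};\mathbf{z}_L)$ coordinatewise, row $i$ has Type-C neighbors $m_i$ and Type-L neighbors $K-m_i$, so
\[
|\cdot|\le\frac{m_i e_t+(K-m_i)\delta_t}{K(1+\rho_i)}\le\frac{1}{1+\rho_i}\max\{e_t,\delta_t\}.
\]
The key point is that the combined neighbor weight in each row is at most $\frac{1}{1+\rho_i}$, so a convex-type bound applies instead of a sum. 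This gives $e_{t+1}\le q\max\{e_t,\delta_t\}$. Unrolling by induction gives
\[
e_t\le\max\bigl\{q^t e_0,\;\max_{u<t}q^{t-u}\delta_u\bigr\}\le\max\bigl\{2q^t,\;q\max_{u<t}\delta_u\bigr\},
\]
using $q^{t-u}\le q$ for $u<t$. I would check the induction step carefully to confirm that the $\max$ (rather than a sum) propagates correctly, including the $t=0$ base case where the inner max is empty.
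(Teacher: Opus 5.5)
Your proposal is correct and follows the paper's proof essentially step for step. It uses the same row estimate in which the $s_i$ and Type-L terms cancel, Banach's theorem on the box with $e_0\le 2$, and the fixed-point/best-response identification via Lemma~\ref{lem:br}. For item~4 it uses the same recursion $e_{t+1}\le q\max\{e_t,\delta_t\}$, and your unrolled induction $e_t\le\max\{q^te_0,\max_{u<t}q^{t-u}\delta_u\}$ is correct. The paper adds one remark you omit: the contraction estimate holds on all of $\mathbb{R}^{|\mathcal{N}_C|}$, so the fixed point, and hence the equilibrium, is unique even if Type-C actions were not restricted to $[-1,1]$.
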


\begin{proof}
$\Phi$ maps $[-1,1]^{|\mathcal{N}_C|}$ into itself by the convex-combination argument of
Lemma~\ref{lem:inv}. For $\mathbf{z}_C,\mathbf{z}'_C$ and $i\in\mathcal{N}_C$, the intrinsic
term and the Type-L neighbors cancel, so only Type-C neighbors contribute:
\[
\begin{aligned}
|\Phi_i(\mathbf{z}_C)-\Phi_i(\mathbf{z}'_C)|
&\le\frac{1}{K(1+\rho_i)}\sum_{j\in S_i\cap\mathcal{N}_C}|z_j-z'_j| \\
&\le\frac{|S_i\cap\mathcal{N}_C|}{K(1+\rho_i)}\,\|\mathbf{z}_C-\mathbf{z}'_C\|_\infty
\le\frac{1}{1+\rho_i}\,\|\mathbf{z}_C-\mathbf{z}'_C\|_\infty ,
\end{aligned}
\]
which proves (1). The same estimate holds on all of $\mathbb{R}^{|\mathcal{N}_C|}$. Since
$[-1,1]^{|\mathcal{N}_C|}$ is complete, Banach's fixed-point theorem gives a unique fixed
point, which by the same estimate is also unique in $\mathbb{R}^{|\mathcal{N}_C|}$, and
$\|\mathbf{z}_C^{(t)}-\mathbf{z}_C^{\mathrm{NE}}\|_\infty
\le q^{t}\|\mathbf{z}_C^{(0)}-\mathbf{z}_C^{\mathrm{NE}}\|_\infty\le 2q^{t}$, which proves
(3). A profile is a fixed point of $\Phi$ if and only if every Type-C agent plays its best
response to the others (Lemma~\ref{lem:br}), that is, if and only if it is a Nash equilibrium
of the opinion stage; this proves (2). For (4), the estimate of (1) extends to
$|\Phi_i(\mathbf{z}_C;\mathbf{z}_L)-\Phi_i(\mathbf{z}'_C;\mathbf{z}'_L)|\le\frac{1}{1+\rho_i}
\max\{\|\mathbf{z}_C-\mathbf{z}'_C\|_\infty,\|\mathbf{z}_L-\mathbf{z}'_L\|_\infty\}$, so the error
$e_t=\|\mathbf{z}_C^{(t)}-\mathbf{z}_C^{\mathrm{NE}}\|_\infty$ satisfies
$e_{t+1}\le q\max\{e_t,\|\mathbf{z}_L^{(t)}-\mathbf{z}_L\|_\infty\}$, and induction from
$e_0\le2$ gives $e_t\le\max\{2q^t,\,q\max_{u<t}\|\mathbf{z}_L^{(u)}-\mathbf{z}_L\|_\infty\}$.
\end{proof}

\begin{lemma}[$K$-nearest-neighbor rewiring as the high-intensity limit]
\label{lem:knn}
For agent $i$ and a $K$-subset $S'\subset\mathcal{N}\setminus\{i\}$, let
$D_i(S')=\sum_{j\in S'}d_{ij}$ with $d_{ij}=|z_j-s_i|$, let $D_i^{*}=\min_{S'}D_i(S')$ and
$\mathcal{M}_i=\arg\min_{S'}D_i(S')$, and order the distances from $i$ as
$d_{(1)}\le d_{(2)}\le\cdots\le d_{(N-1)}$. As $\beta\to\infty$, the rewiring probability of
Section~\ref{sec:KNN} tends to $1/|\mathcal{M}_i|$ for $S'\in\mathcal{M}_i$ and to $0$
otherwise. A $K$-subset belongs to $\mathcal{M}_i$ if and only if it contains every peer with
$d_{ij}<d_{(K)}$ and fills its remaining places with peers at distance exactly $d_{(K)}$. If
$d_{(K)}<d_{(K+1)}$, $\mathcal{M}_i$ is the single $K$-nearest-neighbor set; otherwise the
limit fills the remaining places uniformly at random from the tied peers.
\end{lemma}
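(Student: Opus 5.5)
The plan is to rewrite the rewiring probability as a Gibbs distribution over $K$-subsets and read off its $\beta\to\infty$ limit; the combinatorial description of the minimizers is handled separately by an exchange argument. Since $\prod_{j\in S'}w_{ij}=\exp(-\beta D_i(S'))$, the probability of Section~\ref{sec:KNN} equals
\[
\frac{e^{-\beta D_i(S')}}{\sum_{S''}e^{-\beta D_i(S'')}}
=\frac{e^{-\beta (D_i(S')-D_i^*)}}{|\mathcal{M}_i|+\sum_{S''\notin\mathcal{M}_i}e^{-\beta(D_i(S'')-D_i^*)}},
\]
after multiplying numerator and denominator by $e^{\beta D_i^*}$. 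The family of $K$-subsets is finite, so the gap $\delta=\min_{S''\notin\mathcal{M}_i}(D_i(S'')-D_i^*)$ is strictly positive whenever $\mathcal{M}_i$ is not all subsets. Each term $e^{-\beta(\cdot)}$ in the tail is then at most $e^{-\beta\delta}\to0$, so the denominator tends to $|\mathcal{M}_i|$. The numerator is $1$ on $\mathcal{M}_i$ and at most $e^{-\beta\delta}\to0$ off it, which gives the stated limits $1/|\mathcal{M}_i|$ and $0$. If every $K$-subset is a minimizer, the limit is trivially uniform.

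For the characterization of $\mathcal{M}_i$ I would use an exchange argument. First, a set $S'$ with the stated form has $D_i(S')=\sum_{k\le K}d_{(k)}$. This holds because it takes all $m$ peers with $d_{ij}<d_{(K)}$, which are exactly the first $m<K$ order statistics, and then $K-m$ peers at distance $d_{(K)}$. Since every $K$-subset has $D_i\ge\sum_{k\le K}d_{(k)}$, this value equals $D_i^*$. The bound itself follows by sorting the subset's distances and comparing them termwise with the order statistics.

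Conversely, suppose $S'$ misses some peer $j$ with $d_{ij}<d_{(K)}$. Then $S'$ must contain some $k$ with $d_{ik}\ge d_{(K)}$, because at most $K-1$ peers lie strictly below $d_{(K)}$ and one of them is absent. Swapping $k$ for $j$ strictly decreases $D_i$, so $S'\notin\mathcal{M}_i$. Similarly, suppose $S'$ contains all closer peers but also some $k$ with $d_{ik}>d_{(K)}$. There are at least $K$ peers at distance $\le d_{(K)}$, so some peer at distance $\le d_{(K)}$ lies outside $S'$. Swapping again strictly decreases $D_i$. Hence minimizers are exactly the sets described.

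The last two claims are then bookkeeping. If $d_{(K)}<d_{(K+1)}$, exactly one peer sits at distance $d_{(K)}$ among the remaining places, more precisely the tie set at $d_{(K)}$ contributes exactly the needed number of peers, so $\mathcal{M}_i$ is the unique $K$-NN set. Otherwise, the uniform limit over $\mathcal{M}_i$ is the same as choosing the remaining $K-m$ places uniformly among the tied peers, since the sets in $\mathcal{M}_i$ are in bijection with those choices.

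The main obstacle is the careful handling of ties in the exchange step, in particular counting that enough peers lie at distance $\le d_{(K)}$ for the swap to exist. Everything else is a standard softmax-to-argmax limit.
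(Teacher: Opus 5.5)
Your proposal is correct and follows essentially the same route as the paper: the Gibbs rewrite after multiplying through by $e^{\beta D_i^*}$, the order-statistic lower bound $D_i(S')\ge\sum_{k\le K}d_{(k)}$ attained by sets of the stated form, and a two-case exchange argument for the converse. One phrase needs fixing: in the tie-free case, ``exactly one peer sits at distance $d_{(K)}$'' is false when $d_{(K-1)}=d_{(K)}<d_{(K+1)}$, but your ``more precisely'' correction (exactly $K$ peers lie within $d_{(K)}$, so the tied peers exactly fill the remaining places) is the right statement.
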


\begin{proof}
Multiplying the numerator and the denominator of the rewiring probability by
$\exp(\beta D_i^{*})$ gives
\[
P_\beta(S')=\frac{\exp\!\big(-\beta\,(D_i(S')-D_i^{*})\big)}
                 {\sum_{|S''|=K}\exp\!\big(-\beta\,(D_i(S'')-D_i^{*})\big)} .
\]
Every exponent is now nonpositive. A subset in $\mathcal{M}_i$ contributes exactly $1$, and a
subset outside it has $D_i(S'')-D_i^{*}>0$ and contributes a term that falls to $0$ as
$\beta\to\infty$. The denominator therefore tends to $|\mathcal{M}_i|$, while the numerator
tends to $1$ when $S'\in\mathcal{M}_i$ and to $0$ otherwise.

It remains to identify $\mathcal{M}_i$. Since $D_i(S')$ sums $K$ of the ordered distances,
$D_i(S')\ge\sum_{r=1}^{K}d_{(r)}$ for every $K$-subset, and a subset of the stated form attains
that bound, so $D_i^{*}=\sum_{r=1}^{K}d_{(r)}$ and every such subset lies in $\mathcal{M}_i$.
Conversely, let $S'$ attain $D_i^{*}$. If $S'$ contains a peer $u$ with $d_{iu}>d_{(K)}$, at
most $K-1$ of its members lie within $d_{(K)}$, so some peer $v\notin S'$ has
$d_{iv}\le d_{(K)}<d_{iu}$, and exchanging $u$ for $v$ lowers $D_i$. If $S'$ omits a peer $v$
with $d_{iv}<d_{(K)}$, then, because at most $K-1$ peers lie strictly closer than $d_{(K)}$,
$S'$ contains a peer $u$ with $d_{iu}=d_{(K)}$, and exchanging $u$ for $v$ again lowers
$D_i$. Both cases contradict minimality, so every minimizer has the stated form. When
$d_{(K)}<d_{(K+1)}$, exactly $K$ peers lie within $d_{(K)}$ and $|\mathcal{M}_i|=1$;
otherwise the limit spreads its mass evenly over the subsets that differ only in which of the
tied peers they take.
\end{proof}

The kernel factorizes across agents (Eq.~\eqref{eq:kernel}), so the joint limit is the
product of the per-agent limits.

For a fixed $G$, write $m_{ij}=\mathbf{1}[j\in S_i]+\mathbf{1}[i\in S_j]$ for the weights of
the undirected multigraph induced by $G$, $L$ for its Laplacian, and
$R=\mathrm{diag}(\rho_1,\dots,\rho_N)$. Each unordered pair $\{i,j\}$ appears $m_{ij}$ times
in the disagreement term of Eq.~\eqref{eq:sc}, so
\[
SC(G,\mathbf{z})=\mathbf{z}^{\top}L\mathbf{z}+(\mathbf{z}-\mathbf{s})^{\top}KR\,(\mathbf{z}-\mathbf{s}).
\]

\begin{lemma}[The planner optimum and well-posedness of $\mathrm{PoA}^{(t)}$]
\label{lem:opt}
For every $G$:
\begin{enumerate}
\item $SC(G,\cdot)$ is strictly convex and has the unique minimizer
      $\mathbf{z}^{\mathrm{opt}}=(L+KR)^{-1}KR\,\mathbf{s}$;
\item each $z_i^{\mathrm{opt}}$ is a convex combination of $s_1,\dots,s_N$; in particular
      $\mathbf{z}^{\mathrm{opt}}\in[-1,1]^N$;
\item $SC^{\mathrm{opt}}=\min_{\mathbf{z}}SC(G,\mathbf{z})>0$ whenever some edge $j\in S_i$
      joins agents with $s_i\ne s_j$. In that case
      $\mathrm{PoA}^{(t)}=SC(G^{(t)},\mathbf{z}^{(t)})/SC^{\mathrm{opt}}$ is well defined and
      $\mathrm{PoA}^{(t)}\ge 1$.
\end{enumerate}
\end{lemma}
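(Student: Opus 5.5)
I will work directly with the quadratic form $SC(G,\mathbf{z})=\mathbf{z}^{\top}L\mathbf{z}+(\mathbf{z}-\mathbf{s})^{\top}KR(\mathbf{z}-\mathbf{s})$ derived just before the statement.

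\emph{Part (1).} Its Hessian is $2(L+KR)$. The Laplacian $L$ of the induced multigraph is positive semidefinite, since $\mathbf{z}^{\top}L\mathbf{z}=\sum_{\{i,j\}}m_{ij}(z_i-z_j)^2\ge0$. The matrix $KR$ is diagonal with entries $K\rho_i>0$, so $L+KR$ is positive definite. This gives strict convexity on $\mathbb{R}^N$ and invertibility of $L+KR$. Setting the gradient $2L\mathbf{z}+2KR(\mathbf{z}-\mathbf{s})$ to zero yields $\mathbf{z}^{\mathrm{opt}}=(L+KR)^{-1}KR\,\mathbf{s}$, and this is the unique minimizer over $\mathbb{R}^N$.

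\emph{Part (2).} This step carries the real content. I will show that $B=(L+KR)^{-1}KR$ is entrywise nonnegative with unit row sums. For row sums, note that $L\mathbf{1}=0$, so $(L+KR)\mathbf{1}=KR\mathbf{1}$, which gives $B\mathbf{1}=\mathbf{1}$. For nonnegativity, $L+KR$ has positive diagonal and nonpositive off-diagonal entries $-m_{ij}$. It is strictly diagonally dominant, because its diagonal entry is $\deg_i+K\rho_i>\deg_i=\sum_j m_{ij}$. Hence it is a nonsingular M-matrix and its inverse is entrywise nonnegative.

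To avoid citing M-matrix theory, I can use an equivalent route. Write $L+KR=D-A$ with $D$ diagonal and positive, and $A\ge0$ with row sums of $D^{-1}A$ strictly below one. Then expand $(D-A)^{-1}=\sum_k (D^{-1}A)^kD^{-1}$, which is a Neumann series of nonnegative matrices.

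Since the entries of $KR$ are positive, $B\ge0$ follows. Each $z_i^{\mathrm{opt}}$ is then a convex combination of the $s_j$, and so lies in $[-1,1]$. Consequently the minimizer over $[-1,1]^N$ coincides with the unconstrained one, which makes the planner problem of Eq.~\eqref{eq:poa-operational} well posed.

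\emph{Part (3).} Both terms of $SC$ are nonnegative. If $SC(G,\mathbf{z}^{\mathrm{opt}})=0$, the conformity term forces $z_k=s_k$ for all $k$, because $\rho_k>0$. The disagreement term then forces $z_i=z_j$ on every edge. Together these give $s_i=s_j$ for every edge $j\in S_i$, contradicting the hypothesis. Therefore $SC^{\mathrm{opt}}>0$, and the ratio is defined. Because $\mathbf{z}^{(t)}\in[-1,1]^N$ by Lemma~\ref{lem:inv}, and $\mathbf{z}^{\mathrm{opt}}$ minimizes over that set, the ratio is at least $1$.

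The main obstacle is the nonnegativity of $(L+KR)^{-1}$ in part (2). I will handle it through the diagonal-dominance Neumann-series argument above.
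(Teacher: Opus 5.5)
Your proposal is correct and follows the paper's proof essentially step for step. Part (1) uses positive definiteness of $L+KR$; part (2) uses strict diagonal dominance to get a nonsingular M-matrix with nonnegative inverse, together with $L\mathbf{1}=0$; part (3) uses the characterization of zero social cost. Your Neumann-series justification and your remark that the unconstrained optimum lies in $[-1,1]^N$ (hence solves the box-constrained planner problem of Eq.~\eqref{eq:poa-operational}) are small but welcome additions. The appeal to Lemma~\ref{lem:inv} in (3) is unnecessary, since $\mathbf{z}^{\mathrm{opt}}$ already minimizes $SC(G,\cdot)$ over all of $\mathbb{R}^N$.
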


\begin{proof}
(1) The Hessian is $2(L+KR)$. Since $L\succeq 0$ and, because every $\rho_i>0$, $KR\succ 0$,
we have $L+KR\succ 0$, so $SC$ is strictly convex. Setting the gradient $2L\mathbf{z}+2KR(\mathbf{z}-\mathbf{s})$
to zero gives $(L+KR)\mathbf{z}=KR\mathbf{s}$.

(2) Let $M=L+KR$. Its off-diagonal entries $-m_{ij}$ are nonpositive and each row is strictly
diagonally dominant, since $M_{ii}=\sum_{j}m_{ij}+K\rho_i>\sum_{j\ne i}|M_{ij}|$. Hence $M$ is
a nonsingular M-matrix and $M^{-1}\ge 0$ entrywise, so $W=M^{-1}KR\ge 0$. Since
$L\mathbf{1}=0$, $M\mathbf{1}=KR\mathbf{1}$, and therefore $W\mathbf{1}=\mathbf{1}$. Each row
of $W$ is thus a probability vector, and $\mathbf{z}^{\mathrm{opt}}=W\mathbf{s}$ is a row-wise
convex combination of $\mathbf{s}$.

(3) Both terms of $SC$ are nonnegative, and $SC(G,\mathbf{z})=0$ requires $z_i=s_i$ for all
$i$ and $z_i=z_j$ for every edge, hence $s_i=s_j$ for every edge. If some edge has
$s_i\ne s_j$, then $SC(G,\mathbf{z})>0$ for all $\mathbf{z}$, and since the minimum is attained
by (1), $SC^{\mathrm{opt}}>0$. Finally
$SC(G^{(t)},\mathbf{z}^{(t)})\ge SC^{\mathrm{opt}}$ by definition of the minimum.
\end{proof}

In the sampled populations no three agents share an intrinsic opinion and $K=5$, so every
realized graph contains an edge joining different intrinsic opinions and the stepwise PoA is
well defined.

\begin{remark}[Cost decomposition]
\label{rem:decomp}
Writing Eq.~\eqref{eq:sc} as $SC=\mathcal{D}+\mathcal{C}$ (disagreement and conformity) and
dividing by $SC^{\mathrm{opt}}$ gives
$\mathrm{PoA}^{(t)}=\mathcal{D}^{(t)}/SC^{\mathrm{opt}}+\mathcal{C}^{(t)}/SC^{\mathrm{opt}}$.
\end{remark}

\begin{remark}[The $9/8$ bound]
\label{rem:98}
At $\alpha=1$ H-COG is the $K$-NN game of \citet{bhawalkar2013} with heterogeneous
stubbornness; on a stationary graph Lemmas~\ref{lem:fixed} and~\ref{lem:opt} identify
$\mathrm{PoA}^{(t)}$ with the equilibrium-to-optimum ratio of the Friedkin--Johnsen opinion
game on $G^{(t)}$.
\citet{bindel2015bad} and \citet{bhawalkar2013} bound this ratio by $9/8$ on undirected graphs,
and about 85\% of the edges in H-COG graphs are reciprocal.
\end{remark}

\begin{lemma}[Polarization and conformity cost]
\label{lem:floor}
Let $\sigma_s$ and $\sigma_z$ be the population standard deviations of intrinsic and expressed
opinions, so that $\sigma_z^2=P_z$ (Section~\ref{sec:metrics}), and let $\rho_{\min}$ be the
smallest stubbornness coefficient in the population. For every graph and every opinion
profile,
\[
\sum_i\rho_iK(z_i-s_i)^2\;\ge\;\rho_{\min}KN(\sigma_s-\sigma_z)^2 .
\]
\end{lemma}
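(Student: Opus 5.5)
The plan is to bound the weights by $\rho_{\min}$ and then apply a Euclidean triangle inequality to the centered vectors. Since every $\rho_i\ge\rho_{\min}>0$, we have
\[
\sum_i\rho_iK(z_i-s_i)^2\ge\rho_{\min}K\,\|\mathbf{z}-\mathbf{s}\|_2^2 .
\]
It therefore suffices to show $\|\mathbf{z}-\mathbf{s}\|_2^2\ge N(\sigma_s-\sigma_z)^2$. The graph plays no role, since the conformity term does not involve $G$. Hence the claim holds for every graph, and the whole argument is a statement about two vectors in $\mathbb{R}^N$.

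Write $\bar s$ and $\bar z$ for the population means, and let $P=I-\tfrac1N\mathbf{1}\mathbf{1}^\top$ be the orthogonal projection onto the complement of $\mathbf{1}$. With the population (divide-by-$N$) convention, $\|P\mathbf{s}\|_2=\sqrt N\,\sigma_s$ and $\|P\mathbf{z}\|_2=\sqrt N\,\sigma_z$. Because $P$ is an orthogonal projection, it is nonexpansive:
\[
\|\mathbf{z}-\mathbf{s}\|_2\ge\|P(\mathbf{z}-\mathbf{s})\|_2 .
\]
Equivalently, one can use the Pythagorean split $\|\mathbf{z}-\mathbf{s}\|^2=\|P(\mathbf{z}-\mathbf{s})\|^2+N(\bar z-\bar s)^2$, which is useful to record because the dropped term $N(\bar z-\bar s)^2$ is a mean shift that only adds to the cost. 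The reverse triangle inequality then gives
\[
\|P\mathbf{z}-P\mathbf{s}\|_2\ge\big|\|P\mathbf{s}\|_2-\|P\mathbf{z}\|_2\big|=\sqrt N\,|\sigma_s-\sigma_z| .
\]
Squaring this and combining it with the first two displays yields the stated inequality.

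The only step that needs care is the convention check: $\sigma_z^2=P_z$ must be the population variance with the $1/N$ normalization, so that the factor $N$ appears exactly. If Section~\ref{sec:metrics} instead used $N-1$, the constant would change to $N-1$ and the bound would still hold, since $N-1\le N$ only weakens the right-hand side. I expect no real obstacle.

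It is worth noting that the bound is informative precisely when $\sigma_z<\sigma_s$, that is, when expressed opinions are more concentrated than intrinsic ones, matching the abstract's claim. Equality requires all $\rho_i$ to equal $\rho_{\min}$ on the coordinates where $z_i\neq s_i$, $\bar z=\bar s$, and $P\mathbf{z}$ parallel to $P\mathbf{s}$ with the same orientation. This can be mentioned as tightness: uniform shrinkage of the intrinsic profile toward its mean attains it.
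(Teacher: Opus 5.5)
Your proposal is correct and is essentially the paper's proof. You replace each $\rho_i$ by $\rho_{\min}$, drop the mean component of $\mathbf{z}-\mathbf{s}$ by projecting onto the zero-mean subspace, and apply the reverse triangle inequality to the centered vectors, whose norms are $\sqrt{N}\sigma_s$ and $\sqrt{N}\sigma_z$ under the $1/N$ convention the paper uses for $P_z$.
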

\begin{proof}
The conformity term is at least $\rho_{\min}K\|\mathbf{z}-\mathbf{s}\|^2$. Projecting
$\mathbf{z}-\mathbf{s}$ onto the subspace of zero-mean vectors does not increase its length,
so $\|\mathbf{z}-\mathbf{s}\|\ge\|(\mathbf{z}-\bar z\mathbf{1})-(\mathbf{s}-\bar s\mathbf{1})\|$,
and by the reverse triangle inequality the right-hand side is at least
$\bigl|\|\mathbf{z}-\bar z\mathbf{1}\|-\|\mathbf{s}-\bar s\mathbf{1}\|\bigr|=\sqrt{N}\,|\sigma_z-\sigma_s|$.
\end{proof}
The bound does not depend on how opinions are formed: it holds for Type-C, Type-L and mixed
populations. The more concentrated the expressed opinions are relative to the intrinsic ones,
the higher the floor on the conformity cost.

\subsection{The Hybrid Game as a Markov Game}
\label{sec:markov-mapping}

We write the Hybrid Coevolutionary Opinion Game as a multi-player general-sum
Markov game
\[
\mathcal{M}_{\text{hybrid}}
=\left\langle \mathcal{N},\mathcal{S},\mathcal{A},\mathcal{P},
\{r^i\}_{i\in\mathcal{N}},\gamma\right\rangle ,
\]
whose components are the following.

\paragraph{Players.} $\mathcal{N}=\mathcal{N}_C\cup\mathcal{N}_L$, with
$|\mathcal{N}_L|=\operatorname{round}\big((1-\alpha)N\big)$ Type-L agents and
$|\mathcal{N}_C|=N-|\mathcal{N}_L|$ Type-C agents.

\paragraph{States.} A global state at step $t$ is the triple
$S^{(t)}=(G^{(t)},\mathbf{z}^{(t)},\mathbf{X}^{(t)})$ of Section~\ref{sec:framework-updates}:
the active out-edges, the opinions from which they were built, and the Type-L texts that
produced those opinions. Generated texts have bounded length and opinions are represented at
finite precision, so $\mathcal{S}$ is finite, the property the cited result requires.

\paragraph{Actions.} $\mathcal{A}=\prod_{i=1}^{N}\mathcal{A}_i$. A Type-C agent
plays $a_i=z_i\in\mathcal{Z}$ directly; a Type-L agent plays
$a_i=\mathbf{x}_i\in\mathcal{V}^{\le M}$, which the semantic mapping carries to
$z_i=e(o(\mathbf{x}_i))\in[-1,+1]$.

\paragraph{Transition kernel.} The opinion and text components of the next state are
determined by the joint action, so the kernel of Section~\ref{sec:KNN} is governed by its
graph factor. Write $\mathbf{z}^{(t+1)}$ for the
opinions the joint action $\mathbf{a}^{(t)}$ is scored to, and $\mathbf{s}$ for the fixed
intrinsic opinions. Rewiring snapshots those opinions and then lets every agent rank the
same snapshot on its own, so the $N$ draws are conditionally independent and
\begin{equation}
\label{eq:kernel}
\mathcal{P}\!\left(G^{(t+1)}\mid S^{(t)},\mathbf{a}^{(t)}\right)
=\prod_{i=1}^{N}
\mathbb{P}\!\left(S_i^{(t+1)}\mid \mathbf{z}^{(t+1)},s_i\right),
\end{equation}
with each factor the Gibbs form of Section~\ref{sec:KNN}. Two properties of
Eq.~\eqref{eq:kernel} matter for what follows. It does not depend on
$G^{(t)}$, because the rule rebuilds every out-neighborhood from the current
opinions rather than editing the previous edge set, so the chain is driven by
the opinions alone. And in the $\beta\rightarrow\infty$ limit used throughout
the experiments each factor collapses to the uniform distribution over the minimizers
$\mathcal{M}_i$ of the total distance (Lemma~\ref{lem:knn}).

\paragraph{Rewards.} The reward is the negative stage cost of the opinions the joint action
is scored to, on the graph that action faces,
$r^i(S^{(t)},\mathbf{a}^{(t)})=-C_i(G^{(t)},\mathbf{z}^{(t+1)})$, and since
$z_i,s_i\in[-1,+1]$ it is bounded by $|r^i|\le 4(1+\rho_i)K$.

\subsection{Structural Assumptions}
\label{sec:assumptions}

The learning dynamics of the hybrid Markov coevolutionary opinion game incorporate the
structural assumptions of \citet{chen:lu:lin:shi:hung}. We write $\sigma\in\mathcal{S}$ for
states and $k$ for OGA iterations.

\begin{enumerate}[label=(\roman*)]
    \item \textbf{Norm Pairs and Diameter Bounds:}
    Generated texts have bounded length and opinions are represented at finite precision,
    so each action set $\mathcal{A}_i$ is finite and each state-conditioned policy space
    $\mathcal{X}_i^\sigma$ is the probability simplex over $\mathcal{A}_i$. Under the
    Euclidean norm, which is its own dual, the diameter of the simplex gives
    \[
    \Omega_i^\sigma \triangleq \sup_{x, x' \in \mathcal{X}_i^\sigma} \|x - x'\| = \sqrt{2} < \infty ,
    \]
    and the dual norm pair satisfies $\|\cdot\| \ge C \|\cdot\|_1$ and
    $\|\cdot\|_* \le C^* \|\cdot\|_\infty$ with $C=\min_i|\mathcal{A}_i|^{-1/2}$ and
    $C^*=\max_i|\mathcal{A}_i|^{1/2}$~\citep{chen:lu:lin:shi}.

    \item \textbf{Smooth Regularizer and OGA Updates:}
    Each player updates its state-conditioned strategy $\pi_{i,k}^\sigma$ using
    Optimistic Gradient Ascent with $\ell_i$-smooth regularizers
    $\mathcal{R}_i$~\citep{chen:lu:lin:shi}:
    \begin{align*}
        \hat{x}_{i,k+1}^\sigma &= \Pi_{\mathcal{X}_i^\sigma}\left\{ \hat{x}_{i,k}^\sigma + \eta u_{i,k}^\sigma \right\} \\
        x_{i,k+1}^\sigma &= \Pi_{\mathcal{X}_i^\sigma}\left\{ \hat{x}_{i,k+1}^\sigma + \eta u_{i,k}^\sigma \right\}
    \end{align*}
    where $\Pi_{\mathcal{X}_i^\sigma}$ is the Euclidean projection onto the simplex and
    $u_{i,k}^\sigma$ is the optimistic gradient estimator, with
    $\big\|u_{i,k}^\sigma-Q_{i,k}^\sigma\big(\textstyle\bigotimes_{j\ne i}\pi_{j,k}^\sigma\big)\big\|_\infty\le\epsilon$~\citep{chen:lu:lin:shi}. Here $Q_{i,k}^\sigma(\cdot)\in\mathbb{R}^{|\mathcal{A}_i|}$
    is player $i$'s state-action value vector at $\sigma$ against the other players' current
    policies, and the Nash gap of a profile at $\sigma$ is the largest gain that any single
    player can obtain at $\sigma$ by a unilateral change of its policy.
    With the Euclidean regularizer $\mathcal{R}_i(x)=\tfrac12\|x\|_2^2$, $\ell_i=1$.

    \item \textbf{Learning Rate Regime ($\eta$):}
    The learning rate satisfies $\eta \le O(1/\sqrt{T})$, which offsets the positive regret
    terms generated by time-varying $Q$-value functions in general-sum games
    ~\citep{chen:lu:lin:shi}.

    \item \textbf{Bounded Value Functions:}
    Under quadratic disagreement costs the rewards are bounded, so the state value functions
    $V_{\pi}^i(\sigma)$ are continuous and bounded~\citep{chen:lu:lin:shi}.
\end{enumerate}

The convergence result of \citet{chen:lu:lin:shi} (Theorem~1 there; Theorem~2 of
\citet{chen:lu:lin:shi:hung}) is a dichotomy on the sign of the cumulative regret
$\sum_iR_i^T$ of all players. If $\sum_iR_i^T\ge0$ for every $T$, some iterate is an
approximate Nash equilibrium. Otherwise the proof gives $\sum_iR_i^T\le-\epsilon^2T/(8\eta)$,
and the time-averaged social value of the learning dynamics is bounded below by a fraction of
the optimum, which bounds their price of anarchy. Theorem~\ref{thm:equilibrium} states both
cases for H-COG.

\subsection{Formal Equilibrium Statement}
\label{sec:thm}

\begin{theorem}[Equilibrium Guarantee in Hybrid Opinion Games]
\label{thm:equilibrium}
Consider the hybrid Markov coevolutionary opinion game played by Type-C and Type-L agents
running OGA dynamics with learning rate $\eta \le O(1/\sqrt{T})$ under assumptions (i)--(iv),
and let $\epsilon > 0$ and
\[
T \ge \frac{4}{\epsilon^2 |\mathcal{S}|} \sum_{\sigma \in \mathcal{S}} \sum_{i \in \mathcal{N}}
\big(\Omega_i^\sigma\big)^2=\frac{8N}{\epsilon^2}
\]
(the threshold required in the proof of Theorem~2 of \citet{chen:lu:lin:shi:hung}).
Then one of the following holds.
\begin{enumerate}[label=(\alph*)]
\item If $\sum_iR_i^T\ge0$ for every $T$, there exists an iteration $k \in [T]$ such that the
joint policy profile
$\boldsymbol{\pi}_k^{\sigma_k} = (\pi_{1,k}^{\sigma_k}, \dots, \pi_{N,k}^{\sigma_k})$ is an
approximate Nash equilibrium for the state $\sigma_k$: its Nash gap is at most a multiple of
$\epsilon$~\citep{chen:lu:lin:shi,chen:lu:lin:shi:hung},
\[
\mathrm{Nash\text{-}gap}(\boldsymbol{\pi}_k^{\sigma_k}) \le \epsilon \left( C^* \max_{i}
\Big\| Q_{i,k}^{\sigma_k}\big(\textstyle\bigotimes_{j\ne i}\pi_{j,k}^{\sigma_k}\big)\Big\|_\infty
+ \frac{2 \max_i \{\ell_i \Omega_i^{\sigma_k}\}}{\eta} \right).
\]
\item Otherwise, the time-averaged social value of the dynamics satisfies
\[
\frac{1}{T}\sum_{k=1}^{T} V^{\boldsymbol{\pi}_k}(\sigma) \;\ge\;
\frac{\lambda}{1+\mu}\,V^{\boldsymbol{\pi}^*}(\sigma)+\frac{\epsilon^2}{8\eta(1+\mu)},
\]
where $V^{\boldsymbol{\pi}}(\sigma)=\sum_i V^i_{\boldsymbol{\pi}}(\sigma)$ is the social value,
$\boldsymbol{\pi}^*$ maximizes it, and $\lambda,\mu$ are the constants of the bound of
\citet{chen:lu:lin:shi} (written $\alpha,\beta$ there; renamed here to avoid a clash with the
rewiring intensity $\beta$).
\end{enumerate}
\end{theorem}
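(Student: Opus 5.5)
The plan is a reduction: show that $\mathcal{M}_{\text{hybrid}}$ meets every hypothesis of Theorem~2 of \citet{chen:lu:lin:shi:hung} (equivalently Theorem~1 of \citet{chen:lu:lin:shi}), then read off the two cases of the dichotomy. I would not reprove the OGA regret analysis. I would verify its inputs in the following order.

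\emph{Finiteness.} First check that $\mathcal{S}$ and each $\mathcal{A}_i$ are finite. A Type-C agent plays from the finite grid $\mathcal{Z}$, which by Lemma~\ref{lem:inv} loses nothing relative to $[-1,1]$. A Type-L agent plays from $\mathcal{V}^{\le M}$, which is finite because $\mathcal{V}$ is finite and texts have bounded length. The state $(G,\mathbf{z},\mathbf{X})$ ranges over finitely many $K$-out graphs, opinion profiles in $\mathcal{Z}^N$ (using that $e$ takes values in $\mathcal{Z}$ at finite precision) and text records, so $\mathcal{S}$ is finite.

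\emph{Kernel and rewards.} Next verify that Eq.~\eqref{eq:kernel} is a well-defined Markov kernel. It depends only on $(\mathbf{z}^{(t+1)},\mathbf{s})$, and in the $\beta\to\infty$ regime it is the product of the uniform laws on $\mathcal{M}_i$ from Lemma~\ref{lem:knn}. Then verify that the rewards are bounded by $4(1+\rho_i)K$. Together with $\gamma<1$, this bound makes every $V^i_{\boldsymbol{\pi}}$ and every $Q$-vector bounded, which is assumption~(iv).

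\emph{Constants.} Assumption~(i) fixes the simplex diameter $\Omega_i^\sigma=\sqrt2$ and the norm-equivalence constants $C,C^*$. Assumption~(ii) gives the Euclidean regularizer with $\ell_i=1$. The threshold then evaluates as
\[
\frac{4}{\epsilon^2|\mathcal{S}|}\sum_{\sigma\in\mathcal{S}}\sum_{i\in\mathcal{N}}2=\frac{8N}{\epsilon^2}.
\]

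\emph{Type-L agents.} This is the step I expect to be the main obstacle. The cited theorem requires \emph{every} player to run OGA on its own simplex. Type-L agents instead play the fixed kernel $\pi_i^L$ induced by $\mathbb{P}_\theta$ and the map $e$. My first attempt would use the remark after the Nash equilibrium definition: under~(i) the LLM kernel is a point of the Type-L simplex $\mathcal{X}_i^\sigma$. I would then argue in one of two ways.

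\begin{itemize}
\item Either the Type-L players are formally OGA players whose iterates are tracked in the analysis, with the LLM policy as their initialization and the estimator condition $\|u-Q\|_\infty\le\epsilon$ imposed as assumption~(ii) states.
\item Or they are absorbed into the environment: fixing their policies yields a Markov game among Type-C players with a modified, still finite and bounded, kernel and rewards.
\end{itemize}

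The first route keeps the Nash gap "over all players" as the statement intends, so I would commit to it. I would note explicitly that the guarantee concerns the learning dynamics, not the deterministic-decoding simulation.

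\emph{Reading off the dichotomy.} If $\sum_iR_i^T\ge0$ for all $T$, the cited proof produces an iterate $k\in[T]$ with small optimistic step size. Its Nash-gap bound, instantiated with $C^*$, $\ell_i=1$, $\Omega_i^{\sigma_k}$ and the $Q$-vectors, is exactly~(a). Otherwise the cited proof gives $\sum_iR_i^T\le-\epsilon^2T/(8\eta)$. I would combine this with the $(\lambda,\mu)$-smoothness-type inequality of \citet{chen:lu:lin:shi}, which gives
\[
\textstyle\sum_i V^i(\pi_i^*,\boldsymbol{\pi}_{-i,k})\ge\lambda V^{\boldsymbol{\pi}^*}-\mu V^{\boldsymbol{\pi}_k}.
\]
Summing over $k$, substituting the definition of regret and dividing by $T(1+\mu)$ then yields~(b). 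The only care needed here is sign conventions: rewards are negative costs, so values are nonpositive and "social value" means minus social cost. I would also rename $(\alpha,\beta)$ to $(\lambda,\mu)$ consistently.
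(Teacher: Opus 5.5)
Your proposal is correct and is essentially the paper's own argument. The paper likewise casts H-COG as a finite general-sum Markov game: $\mathcal{S}$ and each $\mathcal{A}_i$ are finite because texts have bounded length and opinions have finite precision, the product rewiring kernel is driven by opinions alone, and rewards are bounded by $4(1+\rho_i)K$. It then checks assumptions (i)--(iv) with $\Omega_i^\sigma=\sqrt{2}$ and $\ell_i=1$, so the threshold is $8N/\epsilon^2$, treats the Type-L agents as OGA players by noting that the LLM kernel is a point of their simplex, and reads both cases off the regret dichotomy of Theorem~2 of \citet{chen:lu:lin:shi:hung}. Your explicit smoothness-plus-regret computation for case~(b) simply spells out the step the paper leaves to \citet{chen:lu:lin:shi}.
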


The result extends to multiple players~\citep[Remark~1]{chen:lu:lin:shi:hung}.

In case~(a) Theorem~\ref{thm:equilibrium} guarantees an approximate equilibrium at some
iteration; in case~(b) the learning dynamics may keep moving, but their time-averaged social
value stays within a bounded factor of the optimum. The run classes of
Section~\ref{sec:stopping} describe what the coevolving graph does in the simulations. A
\texttt{fixed\_point} run stays
on a stationary graph, on which Lemma~\ref{lem:fixed}(3)--(4) drives the Type-C opinions
geometrically to the stage Nash equilibrium, up to $q$ times the residual variation of the
Type-L opinions. In \texttt{limit\_cycle} and \texttt{plateau} runs rewiring continues
without pointwise convergence, and the statistics of the dynamics are invariant, not the
state itself.

\subsection{Interpretation for Hybrid Networks}
\label{sec:interpretation}

Theorem~\ref{thm:equilibrium} has a direct consequence for hybrid opinion games: the
coevolutionary feedback loop between LLM prompt generation and continual $K$-NN rewiring
reaches an approximate Nash equilibrium at some iteration, at which no agent can lower its cost by more
than the Nash gap through a unilateral deviation, or, when the cumulative regret is negative,
keeps its time-averaged social value within a bounded factor of the optimum.
\endgroup

\section{Methodology}
\label{sec:method}
\subsection{Simulation Framework Overview}

\begingroup
\setlength{\emergencystretch}{1em}
Figure~\ref{fig:pipeline} summarizes the two-stage pipeline. \textbf{Stage~1} converts
Reddit comments into stance distributions and samples $N=50$ agents from them.
\textbf{Stage~2} assigns the agents to Type-C and Type-L according to $\alpha$ and runs
H-COG until the stopping rule of Section~\ref{sec:stopping} is met.
\par
\endgroup

\begin{figure}[t]
    \centering
    \includegraphics[width=0.92\linewidth]{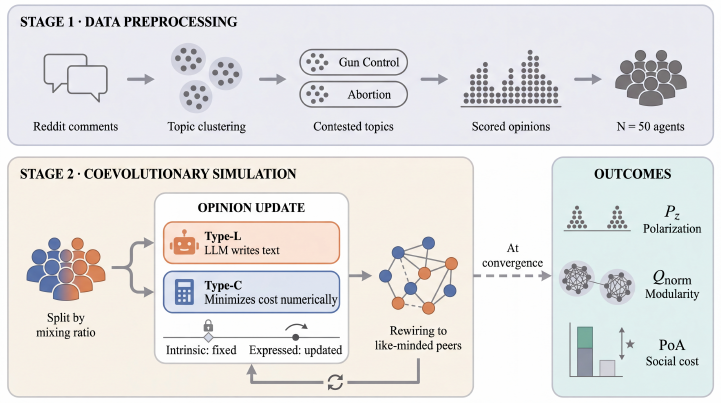}
    \caption{Two-stage pipeline: Stage~1 extracts stance distributions from
    Reddit; Stage~2 runs H-COG with heterogeneous agents on a coevolving
    network.}
    \label{fig:pipeline}
\end{figure}

\subsection{Stage 1: Reddit Data Pipeline}

\subsubsection*{Corpus Processing and Topic Clustering}
\label{sec:corpus}

We source top-level \texttt{r/politics} comments from the Pushshift dumps and keep only the
highest-scoring comment per author and topic. Comments are encoded with the Sentence-BERT model
\texttt{all-}\allowbreak\texttt{MiniLM-}\allowbreak\texttt{L6-v2}~\citep{reimers2019sentencebert}, reduced from 384 to 50 dimensions by
UMAP~\citep{mcinnes2018umap}, and clustered with
HDBSCAN~\citep{campello2013density,mcinnes2017hdbscan} (\texttt{min\_cluster\_size}$=50$,
\texttt{min\_samples}$=5$) into 1,184 topic clusters.

\subsubsection*{Continuous Stance Measurement}
\label{sec:stance}

\paragraph{Architecture.}
We replace the classification head of
\texttt{RobertaForSequenceClassification}~\citep{liu2019roberta} with a
regression head,
\begin{equation}
\hat{y}_i
=\tanh\!\left(\mathbf{W}\mathbf{h}_{[\mathrm{CLS}]}\right)
\in[-1,+1],
\end{equation}
where $\mathbf{h}_{[\mathrm{CLS}]}\in\mathbb{R}^{768}$ is the pooled token
representation and $\mathbf{W}\in\mathbb{R}^{1\times768}$. Negative and
positive endpoints represent strong opposition and support, respectively.

\paragraph{Training.}
The regressor is trained on the IBM Argument Quality 30K
corpus~\citep{gretz2020}, which contains crowd-sourced annotations for
contested topics including gun control and abortion. Graded supervision is
recovered through Best--Worst Scaling~\citep{kiritchenko2017bws}. On 120
held-out texts, the fitted scorer is monotone in group means and correlates
with the graded target at $r=0.82$. This regression head is the semantic mapping $e$ of
Section~\ref{sec:framework-agents}. The same instrument scores the Reddit
corpus, initializes intrinsic opinions, and measures Type-L expressed
opinions, placing all three on one common scale.

\subsubsection*{Stratified Agent Sampling}
\label{sec:stratified}

For each target topic, we select $N=50$ agents as follows:
\begin{enumerate}
    \item Score all comments to obtain the empirical stance distribution
    $\hat{P}(\mathrm{stance})$.
    \item Partition $[-1,+1]$ into five strata using thresholds
    $\{-0.75,-0.25,+0.25,+0.75\}$.
    \item For each stratum $b$, compute its empirical proportion
    $p_b$ and sample
    $n_b=\lfloor50p_b\rceil$ comments, adjusting rounding so that
    $\sum_b n_b=50$.
\end{enumerate}
Agent $i$ keeps the continuous score of its source comment as $s_i$.

\subsection{Stage 2: H-COG Simulation}

\subsubsection{Agent Initialization}
\label{ssec:agent-init}

Each agent receives an intrinsic opinion $s_i$ and a stubbornness
$\rho_i\sim\mathrm{Uniform}(0.3,0.7)$ drawn independently of stance. A Type-C agent
starts at $z_i^{(0)}=s_i$. A Type-L agent also carries a fixed persona written by
Phi-4~\citep{abdin2024phi4}, giving it a name, an age, a gender, an education level and five
personality traits, following the persona conditioning of~\citet{park2023}. The same model writes its opening text from that persona
and the stance keyword of its stratum. Like a Type-C agent, a Type-L agent starts at
$z_i^{(0)}=s_i$; from the first step on, its expressed opinion is the score
$e(o_i^{(t)})$ of the text it writes.

\subsubsection{Network Initialization}
\label{sec:config-net}

We manipulate the initial topology using three graph families:
Barab\'asi--Albert with preferential attachment $m=2$
\citep{barabasi1999emergence}, Erd\H{o}s--R\'enyi~\citep{erdos1959random}, and
Watts--Strogatz~\citep{watts1998collective}. The three families are matched on
mean degree while preserving differences in clustering (Section~\ref{sec:config}).
Ten independently seeded graphs from
each family are crossed with nine $\alpha$ levels and two topics, giving
$9\times3\times10\times2=540$ runs.

$K$-NN rewiring (Eq.~\eqref{eq:knn}) rebuilds all edges after every update, so the initial topology
affects the first update directly and later dynamics only through that
trajectory. Across seeds, we vary the topology and Type-C/Type-L assignment
while holding the sampled agents, intrinsic opinions, personas, and
stubbornness values fixed within each topic, so confidence intervals reflect topology and
assignment variability, not resampling of users.

\subsubsection{Opinion Updates}

Type-C agents use the best response of Eq.~\eqref{eq:type-c-update}. For Type-L agents, whose record is
(\texttt{reasoning}, \texttt{opinion}, \texttt{memory}) (Section~\ref{sec:framework-updates}),
placing \texttt{reasoning} first conditions the opinion on the preceding
deliberation~\citep{wei2022chain}, and a fixed field order keeps the response format stable
across steps~\citep{turpin2023language}. The \texttt{opinion} field is the agent's only
position output and is scored by $e$. We do not request a numerical self-rating because
language models regress toward the center of an offered scale~\citep{llm2025likert} and
anchor on round values~\citep{llm2026numericbias}. Rewriting \texttt{memory} at each step
bounds prompt growth, and current neighbor opinions are supplied verbatim.

\subsection{Structural Convergence and Stopping Rule}

\subsubsection{Convergence Criterion}
\label{sec:conv-motivation}

We define convergence operationally as entry into an attractor or a stable
structural regime: the rate of graph change ceases to decrease, although
individual edges need not stop changing. $K$-NN reconstructs each
out-neighborhood from a continuous ranking, so a small perturbation near the
cutoff can replace an edge. Adaptive networks may consequently settle into
dynamic regimes rather than frozen configurations~\citep{gross2008adaptive}.
Our criterion detects either an exactly recurring edge set or a plateau in
the rate of structural change.

\subsubsection{Structural Convergence Metrics}
\label{sec:conv-metrics}

Let $E(t)$ be the directed edge set and $S_i^{(t)}$ agent $i$'s
out-neighborhood.

\paragraph{Temporal correlation coefficient.}
Following~\citet{tang2010small} and its directed
extension~\citep{buttner2016temporal},
\[
C_i^{\mathrm{out}}(t)
=\frac{|S_i^{(t-1)}\cap S_i^{(t)}|}
{\sqrt{d_i^{\mathrm{out}}(t-1)d_i^{\mathrm{out}}(t)}},
\qquad
C^{\mathrm{out}}(t)
=\frac{1}{N}\sum_i C_i^{\mathrm{out}}(t).
\]
Because every out-degree equals $K$ from $t=1$ on, for $t\ge2$ this is the mean fraction of neighbors
retained between consecutive steps.

\paragraph{Secondary structural measures.}
We also log the normalized Hamming distance
$dS(t)=|E(t)\triangle E(t-1)|/(|E(t)|+|E(t-1)|)$, the $\ell_2$ distance $d_L(t)$ between
the ascending eigenvalue vectors of consecutive normalized Laplacians of the undirected
projection~\citep{chung1997spectral}, and the DeltaCon similarity between consecutive
graphs~\citep{koutra2013deltacon}.

\subsubsection{Stopping Rule and Attractor Classification}
\label{sec:stopping}

A run stops at
\[
t_{\mathrm{conv}}
=\min\{t:\text{exact recurrence or }C^{\mathrm{out}}\text{ plateau}\}.
\]
Exact recurrence requires $E(t)=E(t-p)$ to hold persistently for some
$p\in[1,P_{\max}]$. The plateau condition requires, for
\texttt{patience} consecutive steps (parameters in Table~\ref{tab:stopping}),
\[
\frac{
\left|
\operatorname{mean}(C^{\mathrm{out}}[t-W:t])
-\operatorname{mean}(C^{\mathrm{out}}[t-2W:t-W])
\right|}
{\operatorname{mean}(C^{\mathrm{out}}[t-2W:t-W])}
<\varepsilon_C.
\]

Runs are classified as \texttt{fixed\_point} when the edge set becomes
stationary, \texttt{limit\_cycle}:$p$ when it recurs with
$2\le p\le P_{\max}$, \texttt{plateau} when the retention rate stabilizes
without a detected period, and \texttt{none} when no condition is met within the
simulation horizon. Section~\ref{sec:thm} relates these classes to
Theorem~\ref{thm:equilibrium}.

\begin{table}[t]
\centering
\caption{Stopping-rule parameters.}
\label{tab:stopping}
\begin{tabular}{lll}
\toprule
Parameter & Value & Role \\
\midrule
$W$                   & 10   & averaging-window length \\
$\varepsilon_C$       & 0.01 & maximum relative change between windows \\
\texttt{patience}     & 5    & consecutive steps satisfying the condition \\
\texttt{post\_window} & 10   & steps recorded after detection \\
$P_{\max}$            & 20   & longest tested recurrence period \\
\bottomrule
\end{tabular}
\end{table}

\subsection{Evaluation Metrics}
\label{sec:metrics}

\paragraph{Price of Anarchy.}
The stepwise PoA follows Eq.~\eqref{eq:poa-operational}, with the optimum of
Lemma~\ref{lem:opt} and the decomposition of Remark~\ref{rem:decomp}.

\paragraph{Polarization ($P_z$).}
Following~\citet{chitra2020filter}, polarization is the variance of expressed
opinions:
\begin{equation}
P_z^{(t)}
=\frac{1}{N}\sum_{i=1}^{N}
\left(z_i^{(t)}-\bar{z}^{(t)}\right)^2,
\qquad
\bar{z}^{(t)}=\frac{1}{N}\sum_i z_i^{(t)}.
\end{equation}

\paragraph{Modularity ($Q$ and $Q_{\mathrm{norm}}$).}
Louvain community detection~\citep{blondel2008louvain} is applied to the
undirected projection of $G^{(t)}$, yielding modularity
$Q_{\mathrm{obs}}(t)$~\citep{newman2004finding}. Because random graphs can
exhibit nonzero modularity from degree fluctuations alone
\citep{guimera2004modularity}, we compare the observed graph with a
degree-preserving null model:
\begin{equation}
Q_{\mathrm{norm}}(t)
=Q_{\mathrm{obs}}(t)-\bar{Q}_{\mathrm{rand}}(t),
\qquad
z_Q(t)
=\frac{Q_{\mathrm{obs}}(t)-\bar{Q}_{\mathrm{rand}}(t)}
       {\sigma_{\mathrm{rand}}(t)}.
\end{equation}
The null mean $\bar{Q}_{\mathrm{rand}}$ and standard deviation $\sigma_{\mathrm{rand}}$ are
taken over 20 degree-preserving double-edge-swap randomizations of the undirected
projection and are recomputed at every step. $Q_{\mathrm{norm}}$ is the modularity
attributable to opinion-based sorting beyond the degree sequence.

\subsection{Cross-Run Comparability}
\label{sec:comparability}

Because run lengths differ, outcome metrics are compared at
$t_{\mathrm{conv}}+\texttt{post\_window}$, each run's own evaluation point.

\section{Experiments and Numerical Results}
\label{sec:experiments}
\subsection{Dataset Overview}
\label{sec:dataset}
 
We source data from Pushshift's April 2019 comment
dump~\citep{baumgartner2020pushshift} (RC\_2019-04, 138.5 million comments), of which
1.76 million belong to the \texttt{r/politics} subreddit. The first 500{,}000 valid
\texttt{r/politics} comments in stream order, excluding deleted or removed comments, bodies of
at most 20 characters and comments by deleted authors, enter the clustering stage.
Semantic clustering (Section~\ref{sec:corpus}) identifies 1{,}184 topic clusters across
politics; we focus on two politically contested topics relevant to our simulation:
\textit{gun control} ($n = 3{,}581$ comments) and \textit{abortion}
($n = 1{,}618$ comments).
 
Figure~\ref{fig:stance_dist} shows the stance distributions for both topics under the
regressor of Section~\ref{sec:stance}. Gun control exhibits a
near-neutral mean with high variance, consistent with a bimodal and divided public;
abortion displays a more concentrated opposing consensus. The resulting initial-stance
split among the $N = 50$ sampled agents is 54\%/46\% (oppose/support) for gun control and 68\%/32\% for abortion,
an asymmetry that Section~\ref{sec:res-rq1} shows to be consequential.
 
\begin{figure}[ht]
  \centering
  \includegraphics[width=\linewidth]{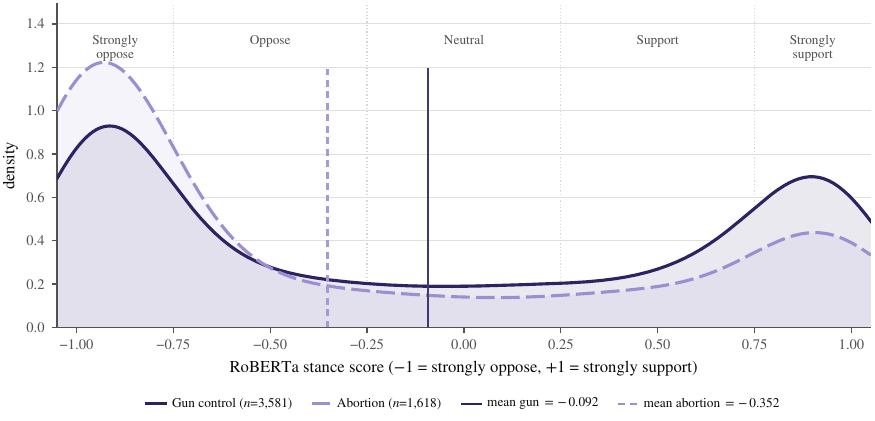}
  \caption{Stance score distributions for gun control ($n=3{,}581$) and abortion
    ($n=1{,}618$) comments in \texttt{r/politics} (April 2019).}
  \label{fig:stance_dist}
\end{figure}
 
\begingroup
\paragraph{The two topics are not independent replicates.}
The 50 sampled agents share identical personas and stubbornness coefficients $\rho_i$
across the two topics; only the initial stance differs.
\endgroup
 
\subsection{Simulation Configuration}
\label{sec:config}
 
All simulations use $N = 50$ agents and $K = 5$ neighbors per agent. The mixing
parameter is varied across
$\alpha \in \{0,\, 0.125,\, 0.25,\, 0.375,\, 0.5,\, 0.625,\, 0.75,\, 0.875,\, 1\}$,
covering the full spectrum from a purely LLM-driven population ($\alpha = 0$) to a
purely cost-minimizing population ($\alpha = 1$).
 
\begingroup
\paragraph{Topology parameters.}
The three initial-topology families of Section~\ref{sec:config-net} are matched on mean
degree (3.84 for BA, 3.84 for ER, 4.00 for WS), so that any
difference between them is attributable to structure rather than to edge count. Their
clustering coefficients remain clearly distinct at 0.179, 0.079, and 0.390
respectively, preserving the contrast of interest.
 
\paragraph{Grid completeness.}
The design yields $n = 60$ runs at every $\alpha$ level and $n = 30$ at every
$(\alpha, \text{topic})$ cell, with no missing cells: all 540/540 runs complete and
enter the analysis. The run is the unit of analysis, so statistical power comes from the
number of conditions and seeds.
\endgroup
 
\subsection{Statistical Analysis}
\label{sec:stats}
 
\begingroup
Variances are markedly unequal across $\alpha$, with the standard deviation of PoA
falling by a factor of about 60 between $\alpha = 0$ and $\alpha = 1$, so comparisons
between $\alpha$ levels use Welch's $t$-test~\citep{welch1947generalization} with
Benjamini--Hochberg FDR control~\citep{benjamini1995controlling} over the family of
adjacent-$\alpha$ comparisons, and we report $q$-values for this family; the two per-topic
contrasts in Section~\ref{sec:res-rq1} are unadjusted $p$-values.
 
Where the claim of interest is equivalence rather than difference, specifically for the
comparison among initial topologies, we test for it directly with two one-sided
tests (TOST)~\citep{schuirmann1987comparison} against a margin fixed in advance at
$\delta = 0.2$ PoA units.
 
All confidence intervals are reported at the 95\% level.
\endgroup
 
\subsection{Implementation Details}
\label{sec:impl}
 
Type-L agents are powered by Phi-4 (14.7B parameters)~\citep{abdin2024phi4} under 4-bit
AWQ quantization~\citep{lin2024awq}, served via vLLM~\citep{kwon2023vllm}; Type-C agents are
closed-form FJ updates.
The full simulation codebase is available at
\url{https://github.com/Evan-Jiamg/Echo-Chamber-Simulation}.

\begingroup
 
\subsection{Numerical Results}
\label{sec:results}

The answers to the three questions of the introduction are summarized below;
Section~\ref{sec:res-conv} first establishes the convergence on which every comparison
rests.

\begin{enumerate}
  \item The efficiency gap is about fivefold, and language agents are displaced from their
    own positions as well as disagreeing with their neighbors (Section~\ref{sec:res-rq1}).
    PoA falls from 5.558 in the purely language population to 1.139 in the purely numerical
    one (Table~\ref{tab:poa-alpha}). From the purely numerical to the purely language
    population the conformity term grows by a factor of 13.2 and the disagreement term by
    3.3, and the conformity share of social cost rises from 16.1\% to 43.6\%
    (Table~\ref{tab:decomp}).
  \item Polarization is lowest in the low-$\alpha$ region, where efficiency is worst
    (Section~\ref{sec:res-rq2}). $P_z$ and $Q_{\mathrm{norm}}$ rise with $\alpha$, while PoA
    holds a statistical plateau up to $\alpha=0.375$ and then falls
    (Tables~\ref{tab:main} and~\ref{tab:poa-alpha}).
  \item Echo chambers come from the rewiring rule rather than the initial topology
    (Section~\ref{sec:res-topology}). They form under all three initial topologies, and at
    $\alpha=1$ the three are equivalent in PoA (Table~\ref{tab:topology}).
\end{enumerate}

\begin{table}[!htbp]
\centering
\caption{Metrics at each run's converged state ($t_{\mathrm{conv}} + 10$), by topic and
mixing ratio. $n = 30$ per row (3 topologies $\times$ 10 seeds), mean $\pm$ 95\% CI.
$\downarrow$ denotes lower-is-better; $Q_{\mathrm{norm}}$ and $z_Q$ as in
Section~\ref{sec:metrics}. Bold marks the lowest PoA for each topic.}
\label{tab:main}
\begin{tabular}{llcccc}
\toprule
Topic & $\alpha$ & $P_z$ & $Q_{\mathrm{norm}}$ & $z_Q$ & PoA $\downarrow$ \\
\midrule
\multirow{9}{*}{Gun Control}
 & 0.0   & $0.261 \pm 0.013$ & $0.241 \pm 0.009$ & $22.8 \pm 2.1$ & $5.681 \pm 0.340$ \\
 & 0.125 & $0.312 \pm 0.016$ & $0.250 \pm 0.009$ & $23.0 \pm 1.5$ & $5.480 \pm 0.312$ \\
 & 0.25  & $0.364 \pm 0.013$ & $0.259 \pm 0.009$ & $23.8 \pm 1.6$ & $5.642 \pm 0.396$ \\
 & 0.375 & $0.409 \pm 0.014$ & $0.263 \pm 0.009$ & $22.5 \pm 1.3$ & $5.477 \pm 0.325$ \\
 & 0.5   & $0.451 \pm 0.012$ & $0.277 \pm 0.007$ & $23.3 \pm 1.3$ & $5.374 \pm 0.395$ \\
 & 0.625 & $0.494 \pm 0.011$ & $0.286 \pm 0.010$ & $23.1 \pm 1.3$ & $4.424 \pm 0.344$ \\
 & 0.75  & $0.538 \pm 0.008$ & $0.314 \pm 0.008$ & $24.9 \pm 1.9$ & $3.821 \pm 0.274$ \\
 & 0.875 & $0.582 \pm 0.006$ & $0.350 \pm 0.009$ & $24.5 \pm 1.3$ & $3.220 \pm 0.222$ \\
 & 1.0   & $0.619 \pm 0.000$ & $0.427 \pm 0.002$ & $27.4 \pm 1.9$ & $\mathbf{1.149 \pm 0.008}$ \\
\midrule
\multirow{9}{*}{Abortion}
 & 0.0   & $0.347 \pm 0.017$ & $0.194 \pm 0.006$ & $18.4 \pm 1.4$ & $5.436 \pm 0.535$ \\
 & 0.125 & $0.364 \pm 0.011$ & $0.199 \pm 0.007$ & $20.7 \pm 1.6$ & $6.322 \pm 0.389$ \\
 & 0.25  & $0.426 \pm 0.015$ & $0.191 \pm 0.007$ & $18.8 \pm 1.6$ & $6.077 \pm 0.367$ \\
 & 0.375 & $0.463 \pm 0.016$ & $0.241 \pm 0.013$ & $23.2 \pm 1.9$ & $5.815 \pm 0.385$ \\
 & 0.5   & $0.480 \pm 0.015$ & $0.283 \pm 0.010$ & $24.4 \pm 2.3$ & $5.117 \pm 0.298$ \\
 & 0.625 & $0.510 \pm 0.013$ & $0.297 \pm 0.008$ & $24.0 \pm 1.9$ & $4.670 \pm 0.296$ \\
 & 0.75  & $0.534 \pm 0.010$ & $0.317 \pm 0.008$ & $24.4 \pm 1.5$ & $3.853 \pm 0.288$ \\
 & 0.875 & $0.554 \pm 0.008$ & $0.343 \pm 0.006$ & $25.0 \pm 1.4$ & $2.758 \pm 0.306$ \\
 & 1.0   & $0.575 \pm 0.000$ & $0.419 \pm 0.002$ & $25.8 \pm 1.4$ & $\mathbf{1.129 \pm 0.001}$ \\
\bottomrule
\end{tabular}
\end{table}
 
\begin{figure}[!htbp]
  \centering
  \includegraphics[width=\linewidth]{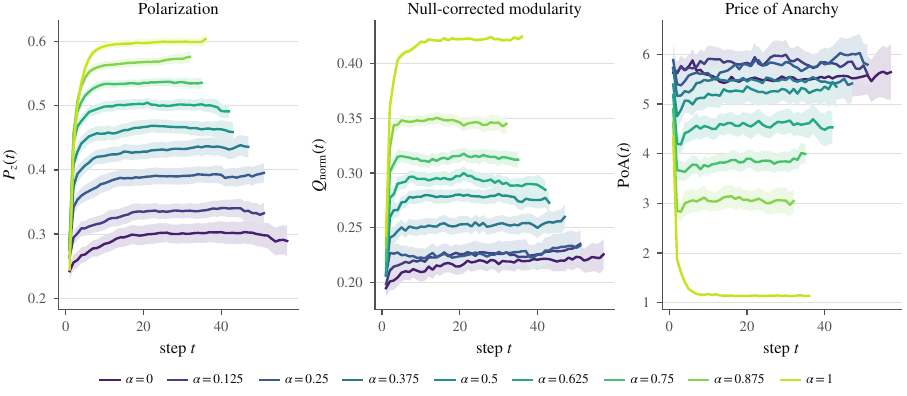}
  \caption{Per-step trajectories of $P_z$, $Q_{\mathrm{norm}}$ and PoA by mixing ratio (mean
    with 95\% CI band). All three are flat well
    before the graph enters its attractor, so their values in Table~\ref{tab:main} do not
    depend on the precise step at which they are read.}
  \label{fig:trajectories}
\end{figure}
 
Table~\ref{tab:main} is the primary result, and Figure~\ref{fig:trajectories} shows the same
quantities over time.

\subsubsection{Convergence as a Precondition}
\label{sec:res-conv}
 
Theorem~\ref{thm:equilibrium} guarantees that OGA learning reaches an approximate Nash
equilibrium at some iteration or, otherwise, a bounded price of anarchy;
the form of the attractor the coevolving graph enters is an empirical question for
both agent types, because $K$-NN rewiring is a discontinuous map applied on top of the
Friedkin--Johnsen update and Type-L positions have no closed form. That both modes of updating
converge, as shown below, licenses comparing converged states across $\alpha$.
 
\begin{figure}[!htbp]
  \centering
  \begin{subfigure}[t]{0.49\textwidth}
    \centering
    \includegraphics[width=\linewidth]{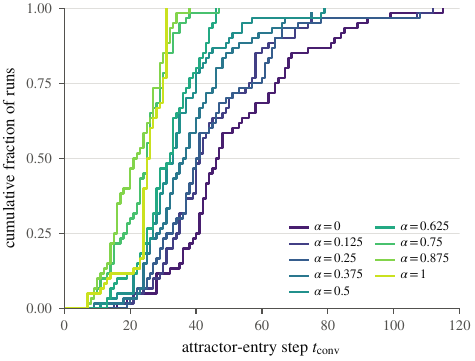}
    \caption{Attractor-entry step $t_{\mathrm{conv}}$, as empirical cumulative
      distribution functions.}
    \label{fig:tconv}
  \end{subfigure}
  \hfill
  \begin{subfigure}[t]{0.49\textwidth}
    \centering
    \includegraphics[width=\linewidth]{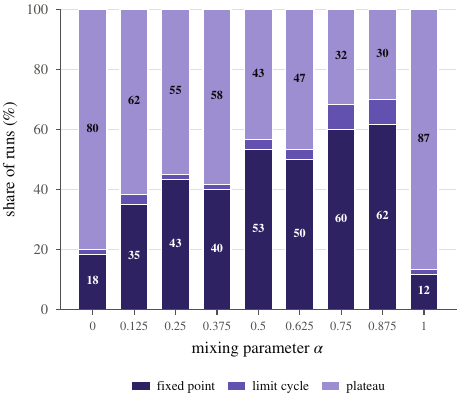}
    \caption{Attractor composition. Exactly stationary edge sets are rare at both pure
      compositions and most common in the mixed regime.}
    \label{fig:attractor}
  \end{subfigure}
  \caption{Settling behavior by mixing ratio ($n = 60$ per curve and per bar, pooled over
    three topologies and two topics).}
  \label{fig:settling}
\end{figure}
 
\paragraph{Both pure populations converge, and mostly as a plateau.}
Figure~\ref{fig:settling} summarizes settling behavior across the grid. Neither pure
population settles predominantly to an exact fixed point: both are classified
\texttt{plateau} in the large majority of runs, and the purely Type-L population reaches
an exactly stationary edge set slightly more often than the purely Type-C one, whose
opinion dynamics converge analytically. Under $K$-NN rewiring, convergence
takes the form of a plateau in the rate of structural change rather than a frozen graph,
and Type-L agents attain that state at least as reliably as the closed-form Type-C
dynamics.
 
\paragraph{Settling time is a nonmonotone function of composition.}
Figure~\ref{fig:tconv} shows $t_{\mathrm{conv}}$ falling monotonically in $\alpha$ from
52.2 steps at $\alpha = 0$ to 21.7 at $\alpha = 0.875$, and then rising to 25.1 at
$\alpha = 1.0$. Run lengths are right-skewed; at $\alpha = 0$ the
median is 46 steps and the mean is 52.2.
 
\paragraph{Attractor composition shows the same reversal.}
Figure~\ref{fig:attractor} resolves the composition by class. Exactly stationary edge
sets account for 18\% of runs at $\alpha = 0$ and 12\% at $\alpha = 1$, whereas
mixtures reach 62\% at $\alpha = 0.875$: a population containing
six LLM agents attains an exact fixed point five times as often as one containing none.

\begin{figure}[!htbp]
  \centering
  \begin{subfigure}[t]{0.49\textwidth}
    \centering
    \includegraphics[width=\linewidth]{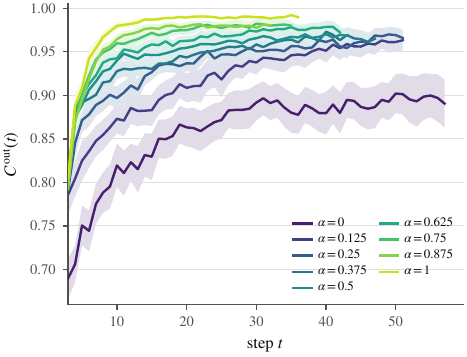}
    \caption{Temporal correlation coefficient $C^{\mathrm{out}}(t)$, the fraction of
      out-neighbors retained between consecutive steps.}
    \label{fig:cout}
  \end{subfigure}
  \hfill
  \begin{subfigure}[t]{0.49\textwidth}
    \centering
    \includegraphics[width=\linewidth]{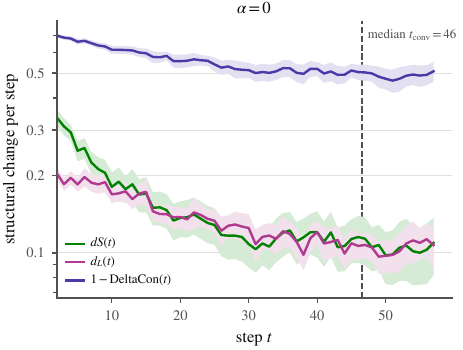}
    \caption{Three measures at $\alpha = 0$ on a logarithmic vertical axis, all
      flattening ahead of the median $t_{\mathrm{conv}}$ (dashed).}
    \label{fig:agreement}
  \end{subfigure}
  \caption{Structural change per step, defined from $t = 2$ (mean with 95\% CI band).
    Panel~(a) is the statistic the stopping rule reads; panel~(b) shows three measures
    that take no part in it.}
  \label{fig:churn}
\end{figure}
 
\paragraph{Rewiring churn separates the purely Type-L population.}
Figure~\ref{fig:churn} traces the per-step structural change, and
Figure~\ref{fig:cout} the stopping statistic itself. The plateau height of
$C^{\mathrm{out}}$ is $0.913$ (SD $0.069$) at $\alpha = 0$ against $0.990$ (SD $0.004$) at
$\alpha = 1$, and elevated churn is confined to $\alpha = 0$. Uniformly random rewiring
gives $C^{\mathrm{out}}\approx K/(N-1)\approx0.10$, so both plateau heights reflect
persistent neighborhoods. The Type-C population comes to rest close to a frozen structure,
whereas the Type-L
population reaches a state in which the rate of turnover has stabilized while a small
fraction of ties continues to change.
 
\paragraph{Measures outside the stopping rule corroborate it.}
Convergence is declared on $C^{\mathrm{out}}$ alone. Figure~\ref{fig:agreement} shows three
measures that take no part in that decision (the normalized Hamming distance between
consecutive edge sets, the spectral distance between their Laplacians, and
$1 - \mathrm{DeltaCon}$) at $\alpha = 0$, the slowest-settling composition: each flattens at
around $t = 25$ to $30$, ahead of the median $t_{\mathrm{conv}}$ of 46, so the measure the
rule reads is the last of four to register the settled regime. The criterion is satisfied
within the simulation horizon in all runs (540/540), and no run is excluded.

\subsubsection{The Efficiency Gap and Its Decomposition}
\label{sec:res-rq1}
 
\paragraph{The purely Type-C baseline.}
At $\alpha = 1$ the measured PoA is $1.139 \pm 0.005$, the inefficiency of the opinion
formation game itself, against which every other composition is compared. On the
$\alpha = 1$ runs with a stationary graph the computed opinions coincide with the fixed point
of Lemma~\ref{lem:fixed}. The value is consistent with the
$9/8$ bound (Remark~\ref{rem:98}), and it lies well below the worst case of the $K$-NN game
for $\rho<1$, which is at least $1/\rho^2\ge2$ over the range of $\rho_i$ used
here~\citep{bhawalkar2013}.
 
\begin{table}[!htbp]
\centering
\begin{minipage}[t]{0.44\textwidth}
  \centering
  \caption{Price of Anarchy by mixing ratio, pooled ($n = 60$ per row), mean $\pm$ 95\% CI. The $q$ column
    gives the Welch test against the preceding level under Benjamini--Hochberg control;
    bold marks $q < 0.05$.}
  \label{tab:poa-alpha}
  \begin{tabular}{lcl}
  \toprule
  $\alpha$ & PoA & $q$ vs.\ prev. \\
  \midrule
  0.0   & $5.558 \pm 0.309$ & --- \\
  0.125 & $5.901 \pm 0.265$ & 0.127 \\
  0.25  & $5.860 \pm 0.268$ & 0.826 \\
  0.375 & $5.646 \pm 0.249$ & 0.280 \\
  0.5   & $5.246 \pm 0.242$ & \textbf{0.036} \\
  0.625 & $4.547 \pm 0.222$ & $\mathbf{8.5{\times}10^{-5}}$ \\
  0.75  & $3.837 \pm 0.193$ & $\mathbf{1.1{\times}10^{-5}}$ \\
  0.875 & $2.989 \pm 0.193$ & $\mathbf{3.0{\times}10^{-8}}$ \\
  1.0   & $1.139 \pm 0.005$ & $\mathbf{3.9{\times}10^{-26}}$ \\
  \bottomrule
  \end{tabular}
\end{minipage}
\hfill
\begin{minipage}[t]{0.50\textwidth}
  \centering
  \caption{The same social cost separated into its two components, pooled
    ($n = 60$ per row). Their sum is the Price of Anarchy at left; bold marks the largest and smallest
    conformity share.}
  \label{tab:decomp}
  \begin{tabular}{lccc}
  \toprule
  $\alpha$ & Disagreement & Conformity & Conf.\ share \\
  \midrule
  0.0   & 3.136 & 2.422 & \textbf{43.6\%} \\
  0.125 & 3.690 & 2.211 & 37.5\% \\
  0.25  & 3.882 & 1.978 & 33.7\% \\
  0.375 & 3.822 & 1.824 & 32.3\% \\
  0.5   & 3.610 & 1.636 & 31.2\% \\
  0.625 & 3.127 & 1.420 & 31.2\% \\
  0.75  & 2.678 & 1.159 & 30.2\% \\
  0.875 & 2.175 & 0.814 & 27.2\% \\
  1.0   & 0.956 & 0.183 & \textbf{16.1\%} \\
  \bottomrule
  \end{tabular}
\end{minipage}
\end{table}
 
The fall in PoA is not a smooth interpolation (Table~\ref{tab:poa-alpha}): no comparison up
to and including the one reaching $\alpha = 0.375$ is
significant ($q > 0.12$), and every comparison from $\alpha = 0.5$ onward is
significant ($q < 0.04$). The low-$\alpha$ region is a
statistical plateau, and the decline begins only once cost-minimizing agents approach
parity. PoA is minimized at $\alpha = 1$ and no interior mixture approaches it; since
polarization is lowest in the low-$\alpha$ plateau, where efficiency is worst, no composition
optimizes both.
 
\begin{figure}[!htbp]
  \centering
  \includegraphics[width=0.75\linewidth]{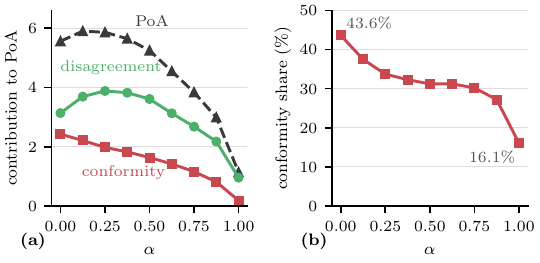}
  \caption{Social-cost decomposition against the mixing ratio, on the values of
    Table~\ref{tab:decomp}. (a) The disagreement term rises to a peak at $\alpha = 0.25$
    before falling, while the conformity term declines throughout; each component is
    divided by the planner optimum of the same step, so the two sum to the Price of
    Anarchy, shown dashed. (b) The conformity share falls from 43.6\% to 16.1\%.}
  \label{fig:decomp}
\end{figure}

Table~\ref{tab:decomp} and Figure~\ref{fig:decomp} separate the two components.
Between $\alpha = 1$ and $\alpha = 0$ the disagreement term grows by a factor of
3.3 ($0.956 \to 3.136$) while the conformity term grows by a factor of 13.2
($0.183 \to 2.422$), and the conformity share falls as $\alpha$ rises. The two components
contribute about equally to the gap of 4.42 between the pure populations, 2.18 from
disagreement and 2.24 from conformity.
 
Within the plateau the two topics differ: from $\alpha = 0$ to $\alpha = 0.125$ PoA rises
from $5.436$ to $6.322$ for abortion ($p = 0.008$), while gun control is flat ($p = 0.38$) and
attains its maximum at $\alpha = 0$. Because the two topics share agent
personas and stubbornness coefficients (Section~\ref{sec:dataset}), the difference is
attributable to the stance distribution rather than to the agents themselves.
 
\subsubsection{Polarization and Echo Chambers}
\label{sec:res-rq2}

\begin{figure}[!htbp]
  \centering
  \includegraphics[width=0.75\linewidth]{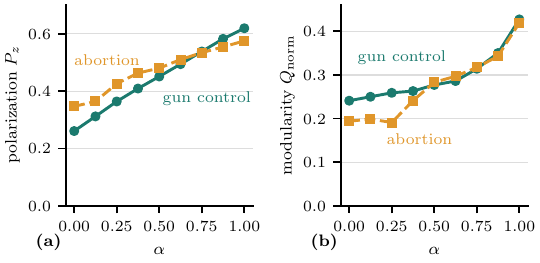}
  \caption{Polarization and echo-chamber structure against the mixing ratio, by topic.
    (a) The variance of expressed opinions rises with $\alpha$ for both topics. (b) The
    null-corrected modularity rises over the same range. Both vertical axes start at zero.}
  \label{fig:structure}
\end{figure}
 
Figure~\ref{fig:structure} shows both measures by topic. Pooled,
$P_z$ rises from $0.304 \pm 0.015$ to $0.597 \pm 0.006$ and $Q_{\mathrm{norm}}$ from
$0.218 \pm 0.008$ to $0.423 \pm 0.002$ across the range of $\alpha$. Echo-chamber
structure forms under every composition, but its strength depends on agent type:
Type-L populations produce measurably weaker polarization and weaker community
structure.
 
All claims use $Q_{\mathrm{norm}}$ rather than bare $Q$. At $\alpha = 1$ the bare value
for gun control is $0.763$ while the degree-preserving null attains $0.336$ on the same degree sequence,
so roughly 44\% of raw modularity follows from the constant out-degree the $K$-NN rule
imposes. What remains is far from null at every composition: $z_Q$ places the observed
modularity between 18 and 27 standard deviations above the null distribution across the
grid.
 
\subsubsection{The Role of the Rewiring Mechanism}
\label{sec:res-topology}
 
\begin{table}[!htbp]
\centering
\caption{PoA by initial topology ($n = 20$ per cell), mean $\pm$ 95\% CI, with the largest
absolute pairwise gap between families at each level and its TOST equivalence $p$-value
against $\delta = 0.2$; bold marks established equivalence.}
\label{tab:topology}
\begin{tabular}{lccccc}
\toprule
$\alpha$ & BA & ER & WS & largest gap & TOST $p$ \\
\midrule
0.0  & $5.348 \pm 0.503$ & $5.766 \pm 0.737$ & $5.562 \pm 0.395$ & $0.418$ & 0.694 \\
0.25 & $5.944 \pm 0.356$ & $5.961 \pm 0.531$ & $5.673 \pm 0.557$ & $0.288$ & 0.594 \\
0.5  & $5.078 \pm 0.373$ & $5.422 \pm 0.417$ & $5.237 \pm 0.521$ & $0.344$ & 0.704 \\
0.75 & $3.912 \pm 0.420$ & $4.051 \pm 0.332$ & $3.549 \pm 0.247$ & $0.501$ & 0.932 \\
1.0  & $1.138 \pm 0.008$ & $1.134 \pm 0.007$ & $1.144 \pm 0.010$ & $0.010$ & $\mathbf{<0.0001}$ \\
\bottomrule
\end{tabular}
\end{table}
 
Table~\ref{tab:topology} reports PoA by family. Across all nine $\alpha$ levels and both
PoA and $P_z$, the confidence intervals of the three topology families overlap and no
comparison attains significance. The two one-sided tests (TOST) against the margin $\delta = 0.2$
(Section~\ref{sec:stats}) establish equivalence at $\alpha = 1$ ($p < 0.0001$). At the
five levels of Table~\ref{tab:topology} the initial topology shifts PoA by at most 0.50,
against a composition effect of 4.42.
 
The three families are matched on mean degree while their clustering coefficients differ by
a factor of five (Section~\ref{sec:config}), and $K$-NN rewiring overwrites the initial edge set
at the first step and thereafter rebuilds every out-neighborhood from the current expressed
opinions. That no difference is detectable under this contrast, at $\alpha = 0$ as much as at
$\alpha = 1$, indicates that echo-chamber formation here requires no pre-existing hub
structure and is not a property of the FJ closed form: echo chambers follow from the
connection rule rather than from the way the population was wired at the outset, and the
composition of the population determines how deep the clustering is, not whether it occurs.
 
\endgroup

\FloatBarrier
\section{Discussion, Conclusions and Future Work}
\label{sec:discussion}
        \subsection{Interpretation of Key Findings}

    \begingroup
    \paragraph{Language reasoning does not anchor an agent to its original position.}
    Both kinds of member carry their original position into every round; they differ in how
    strongly that position binds them. For an analytical member the original position is
    written into the cost function and acts as a restoring force: however uniform its
    neighbors become, departing from it costs the same, so an analytical member compromises
    with its neighbors without drifting far (Lemma~\ref{lem:br}).

    The two kinds also engage their surroundings at different depths.
    From each neighbor an analytical member receives a single number, and its best response
    (Section~\ref{sec:framework-updates}) is a weighted average of those numbers and its own
    stance: it knows where a neighbor stands but not why. A language member receives the
    reasons its neighbors have written and answers in its own words, facing the arguments
    behind a position, not only the position. After the first exchange the opinions of both
    kinds crowd into the middle, with $P_z$ of only about 0.25 (Figure~\ref{fig:trajectories}),
    and rewiring then brings to each member the peers whose expressed opinions lie closest to
    its original stance. The cost function pulls an analytical member back; a language member
    restates its position in its own words and returns only part of the way. The stance in its
    prompt never changes, but the stance it voices is rewritten in every exchange.
    \endgroup

\subsection{Conclusions}

    \begingroup
    The Price of Anarchy measures how far a settled community lies from the best state it
    could have reached. Theorem~\ref{thm:equilibrium} guarantees that OGA learning
    reaches an approximate equilibrium or, otherwise, a bounded price of anarchy, and a community reasoning in language settles as reliably as one
    updating by a cost function. What we compare is therefore the final shape of two
    communities rather than snapshots taken at arbitrary times.

    The cost a community pays comes from quarrels among its members and also from members
    quietly leaving what they originally believed. Members that update by a cost function hold
    to their positions and compromise with those around them without drifting far. Members
    that reason in language voice new positions over repeated exchanges of reasons; they leave
    their original place and still do not stand with those around them. Half of their extra
    cost falls here, and each member bears it privately. Such a community looks less divided,
    but mildness has a price. When the opinions members voice are more concentrated than their
    original positions, the cost of leaving those positions has an unavoidable floor, and the
    floor rises with the concentration (Lemma~\ref{lem:floor}).

    Recommendation algorithms on social media favor controversial topics, because divisive,
    hostile content draws more engagement~\citep{rathje2021outgroup,milli2025engagement}. In
    that environment it matters increasingly whether a community merely looks mild or has
    actually reached consensus. A forum that looks calm may simply be one whose members no
    longer voice what they first thought. A forum split into camps need not be one of stubborn
    members either: in our model the camps come from members' preference for associating with
    like-minded others~\citep{mcpherson2001birds} rather than from the network they started in.
    An audit of a platform should ask how widely opinions are spread, and also how far users now
    stand from what they first believed.
    \endgroup

    \subsection{Limitations and Future Work}
     
    \begingroup
    \begin{enumerate}
    \item All Type-L agents here are instances of a single model, Phi-4, under one prompt
    framework, with a distinct persona for each agent. Running the same grid across model families,
    model sizes and prompt frameworks would separate a property of language-based opinion updating
    from a property of one instance.

    \item Efficiency is stated relative to the Friedkin--Johnsen benchmark. A human baseline on the same corpus,
    in which participants revise their stance under the same rewiring rule, would state it
    relative to human behavior instead.
    \end{enumerate}

    We plan to run both experiments. Frameworks such as AgentSociety~\citep{piao2025agentsociety} and
    OASIS~\citep{yang2024oasis} already run LLM-agent societies of $10^4$ to $10^6$ agents;
    along these lines, we plan to scale the game from $N=50$ to $10^3$ to $10^4$ agents.
    Finally, the Pushshift archives used here predate the 2023 restrictions on Reddit data
    access~\citep{poudel2024postapi}; we intend to initialize runs from current platform
    data, such as Bluesky interaction networks collected through its open
    API~\citep{quelle2025bluesky} and stance-labelled political corpora such as
    PolitiSky24~\citep{rostami2025politisky24}.

    \endgroup

\bibliographystyle{plainnat}
\bibliography{references}

\end{document}